\begin{document}
\title*{A Model for the Outbreak of COVID-19: Vaccine Effectiveness in a Case Study of Italy}
\author{Vasiliki Bitsouni, Nikolaos Gialelis and Ioannis G. Stratis}
\institute{Vasiliki Bitsouni \at Department of Mathematics, National and Kapodistrian University of Athens, Panepistimiopolis, GR-15784 Zographou, Athens, Greece, \email{vbitsouni@math.uoa.gr}\\
\& \\
Applied Mathematics Laboratory, School of Science and Technology, Hellenic Open University, 13-15 Tsamadou str., GR-26222 Patras, Greece.
\and Nikolaos Gialelis \at Department of Mathematics, National and Kapodistrian University of Athens, Panepistimiopolis, GR-15784  Zographou, Athens, Greece, \email{ngialelis@math.uoa.gr} \and Ioannis G. Stratis \at Department of Mathematics, National and Kapodistrian University of Athens, Panepistimiopolis, GR-15784 Zographou, Athens, Greece, \email{istratis@math.uoa.gr}}

\maketitle

\abstract{We present a compartmental mathematical model with demography for the spread of the COVID-19 disease, considering also asymptomatic infectious individuals. We compute the basic reproductive ratio of the model and study the local and global stability for it. We solve the model numerically based on the case of Italy. We propose a vaccination model and we derive threshold conditions for preventing infection spread in the case of imperfect vaccines.}

\keywords{Mathematical modeling of COVID-19; SEIAR; SVEIAR; Asymptomatic cases; Endemic; Basic reproductive ratio; Stability analysis; Lyapunov function; Vaccine effectiveness; Epidemic prediction; Italy case study; Numerical simulations}

\section{Introduction}
\label{intro}
In late 2019, the severe acute respiratory syndrome coronavirus 2 (SARS-CoV-2) \cite{gorbalenya2020}, a strain of coronavirus that causes coronavirus disease 2019 (COVID-19), appeared in Wuhan, China, and rapidly spread across the globe. In January 2020 the human-to-human transmission of SARS-CoV-2 was confirmed, during the COVID-19 pandemic, and SARS-CoV-2 was designated a Public Health Emergency of International Concern by the WHO, and in March 2020 the WHO declared it a pandemic. Since December 31, 2019, and as of July 23, 2020, 15,455,848 cases  of COVID-19 (in accordance with the applied case definitions and testing strategies in the affected countries) are confirmed in more than 227 countries and 26 cruise ships \cite{Worldometer}. Currently, there are 5,415,828 active cases and 631,775  deaths.

As of spring 2020, Italy is one of the countries suffering the most with the COVID-19 outbreak. One of the most critical facts about COVID-19, that affected Italy too, is that a significant number of cases, mainly those of young age, has been reported as asymptomatic \cite{Lavezzo2020}, leading to fast spread of the infection. Fortunately, asymptomatic cases have a shorter duration of viral shedding and lower viral load \cite{Yang2020,Li2020}. However, the proportion of asymptomatic cases can range from 4\%-80\% \cite{Heneghan2020} and most of the time they play a key role in infection transmission, therefore it is very important to model both symptomatic and asymptomatic cases. The most crucial element of COVID-19 pandemic is demography. After the end of the quarantine and lock-down there have been many examples of countries where the number of cases increased rapidly.  Although, demography is a very important element in epidemics and especially in the case of COVID-19, only a few models describe the dynamics of infection with SARS-CoV-2  considering demographic terms. To this end, here we develop an SEIAR model, considering both symptomatic and asymptomatic cases, and demographic terms, with a constant influx of susceptible individuals. In this study we only consider horizontal infection transmission\footnote{ Horizontal transmission is transmission by direct contact between infected and susceptible individuals or between disease vectors and susceptible individuals, that are not in a parent-progeny relationship. Vertical transmission is the passage of the disease-causing agent  from mother to baby during the period immediately before and after birth.}.

As the development of a vaccine against SARS-CoV-2 is an urgent demand\footnote{During the days of writing this paper, promising progress has been announced in this direction.}, a study of the epidemiological consequences that an imperfect potential vaccine can have is needed, and to the best of our knowledge, there is currently no relevant study in the literature. Here, we develop a theoretical framework to assess the vaccine effectiveness and the epidemiological consequences of a potential vaccine. Given that the second phase might lead to more asymptomatic cases than the first phase we need to investigate the asymptomatic group for different coverage and efficacy. The model derived in this study provides new insights on the effect of different vaccine coverage and efficacy on infection spread in a population with demographics.

This study is organized as follows. In Section~\ref{modelling}, we develop a new SEIAR model for COVID-19. We derive the $\mathcal{R}_0$ of the model and we study the local and global stability of the steady states. Numerical result with a focus on data fitted to Italy case are presented. In Section \ref{modelling_vaccine}, we extend our model including also a vaccinated group of individuals, we study the global stability of the steady states and we predict the vaccine effectiveness. We conclude in Section \ref{conclusions} with a summary and discussion of future directions.

\section{Modelling Transmission Dynamics}
\label{modelling}
\subsection{An SEIAR Model for COVID-19}
In this Section we model the transmission of SARS-CoV-2 among people causing COVID-19. To derive our model we consider people who have been in contact with an infectious individual, but remain uninfected for a latency period. Moreover, a significant number of individuals being infected remain asymptomatic, due to various factors such as age, health condition etc \cite{Lavezzo2020}. To this end, we classify the total population ($N$) into five subclasses: susceptible ($S$), latent/exposed ($E$), symptomatic infectious ($I$), asymptomatic infectious ($A$), and recovered ($R$) individuals, hence we have that $S+E+I+A+R=N$. We can consider that $N$ is practically constant, since the time-span of the epidemiological phenomenon is relatively short and $N$ is relatively large. We take into account demographic terms and we consider the transmission to be exclusively horizontal. The SEIAR model is governed by the following system of nonlinear ordinary differential equations:
\begin{subequations}
\label{SEIAR}
\begin{align}
\dfrac{\mathrm{d}S}{\mathrm{d}t}&=\mu N-\beta_I SI-\beta_A SA-\mu S,\label{SEIAR;a}\\
\dfrac{\mathrm{d}E}{\mathrm{d}t}&=\beta_I SI+\beta_A SA -\left(k+\mu\right) E,\label{SEIAR;b}\\
\dfrac{\mathrm{d}I}{\mathrm{d}t}&=k \left(1-q\right)E-\left(\gamma_I+\mu\right)I,\label{SEIAR;c}\\
\dfrac{\mathrm{d}A}{\mathrm{d}t}&=k q E-\left(\gamma_A+\mu\right)A,\label{SEIAR;d}\\
\dfrac{\mathrm{d}R}{\mathrm{d}t}&=\gamma_I I+\gamma_A A-\mu  R,\label{SEIAR;e}
\end{align}
\end{subequations}
with initial conditions:
\begin{equation}
\left(S\left(0\right),E\left(0\right),I\left(0\right),A\left(0\right),R\left(0\right)\right)=\left(S_0,E_0,I_0,A_0,R_0\right)\in \left( \mathbb{R}_{0}^{+} \right)^5,
\end{equation}
where $\mu$ is the birth/death rate, $\beta_I$ and $\beta_A$ the transmission rates of $I$ and $A$, respectively\footnote{$\beta_{I,A}\equiv\dfrac{c\,\varpi_{I,A}}{N}$, where $c$ is the average number of close contacts of an individual with other individuals and $\varpi_{I,A}$ is the probability of a contact to be effective in turning an $S$ individual into an $I$ or $A$ one, respectively.}, $k$ the incubation rate, i.e. the rate of latent individuals becoming infectious, $q$ the proportion of asymptomatic infectious individuals, $\gamma_I$ and $\gamma_A$ the recovery rates of infectious  and asymptomatic infectious individuals, respectively. A flow diagram of the model is illustrated in Fig.~\ref{SEIAR_flow_diagram}. A straightforward application of the classical ODE theory yields that the above Cauchy problem is well-posed. 
\begin{figure}[h!]
\centering
\sidecaption[t]
\includegraphics[scale=0.95]{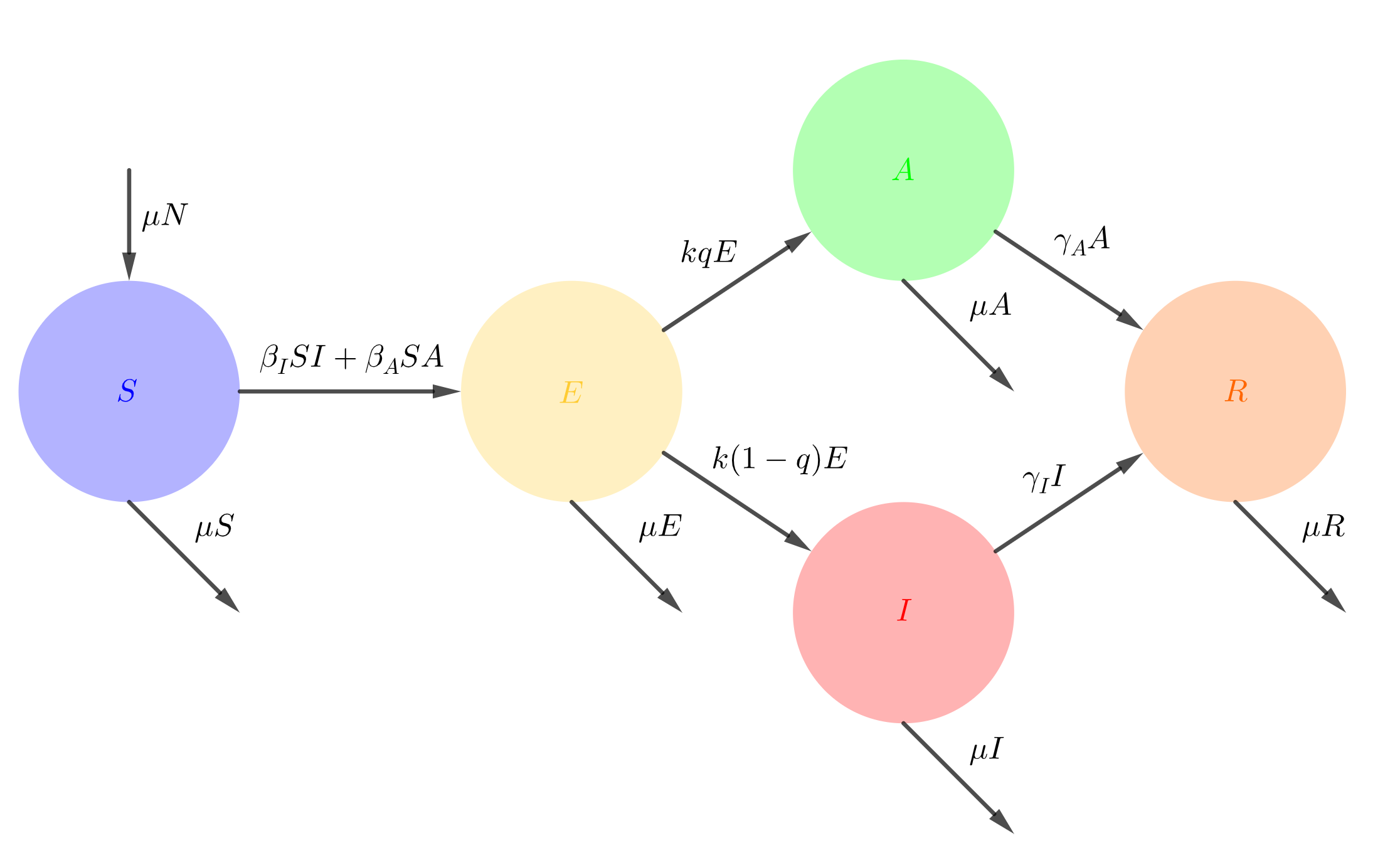}
\caption{Flow diagram of the SEIAR model \eqref{SEIAR}.}
\label{SEIAR_flow_diagram}     
\end{figure}

\subsection{The Basic Reproductive Ratio}
The average number of secondary cases arising from one infection when the entire population
is susceptible is defined as the basic reproductive ratio and denoted by $\mathcal{R}_0$. Being the most important quantity on infectious disease epidemiology, the basic reproductive ratio is a dimensionless quantity, which is  often used to reflect how infectious a disease is. To define $\mathcal{R}_0$ we calculate the next-generation matrix of the system, see, e.g.,  \cite{Diekmann1990,Diekmann2010}. 

First, we linearise model \eqref{SEIAR} around the disease-free steady state, $\left(N,0,0,0,0\right)$,  and we consider the infected states, i.e. $E,I,A$, to obtain the \textit{linearised infection subsystem}
\begin{subequations}
\label{linear_inf_SEIAR}
\begin{align}
\dfrac{\mathrm{d}E}{\mathrm{d}t}&=\beta_I NI+\beta_A NA -\left(k+\mu\right) E,\label{linear_inf_SEIAR;a}\\
\dfrac{\mathrm{d}I}{\mathrm{d}t}&=k \left(1-q\right)E-\left(\gamma_I+\mu\right)I,\label{linear_inf_SEIAR;b}\\
\dfrac{\mathrm{d}A}{\mathrm{d}t}&=k q E-\left(\gamma_A+\mu\right)A.\label{linear_inf_SEIAR;c}
\end{align}
\end{subequations}
Then, we set $\textbf{x}=\left(E,I,A\right)^{\mathrm{tr}}$, so that the system \eqref{linear_inf_SEIAR;a}-\eqref{linear_inf_SEIAR;c}  can be written in the form
\begin{equation*}
   \dfrac{\mathrm{d}\textbf{x}}{\mathrm{d}t}=\left(\textbf{T}+\bf{\Sigma}\right)\textbf{x},
\end{equation*}
where
\begin{equation*}
    \bf{T}=
   \begin{bmatrix}
0 & \beta_I N & \beta_A N\\
0 & 0 & 0\\
0 & 0 & 0\\
\end{bmatrix}
\end{equation*}
is the \textit{transmission matrix}, and
\begin{equation*}
    \bf{\Sigma}=
   \begin{bmatrix}
-\left(k+\mu\right) & 0 & 0\\
k\left(1-q\right) & -\left(\gamma_I+\mu\right) & 0\\
kq & 0 & -\left(\gamma_A+\mu\right)\\
\end{bmatrix}
\end{equation*}
is the \textit{transition matrix}.\\ Then, $\mathcal{R}_0$ is defined as the dominant eigenvalue of matrix $\bf{K}=-\bf{T\Sigma^{-1}}$ as follows:
\begin{align*}
    \bf{K}&=-\bf{T\Sigma^{-1}}=-
   \begin{bmatrix}
0 & \beta_I N & \beta_A N\\
0 & 0 & 0\\
0 & 0 & 0\\
\end{bmatrix}\cdot
\begin{bmatrix}
-\dfrac{1}{\left(k+\mu\right)} & 0 & 0\\
-\dfrac{k\left(1-q\right)}{\left(k+\mu\right)\left(\gamma_I+\mu\right)} & -\dfrac{1}{\left(\gamma_I+\mu\right)}  & 0\\
-\dfrac{kq}{\left(k+\mu\right)\left(\gamma_A+\mu\right)} & 0 & -\dfrac{1}{\left(\gamma_A+\mu\right)}\\
\end{bmatrix}\\&=
\begin{bmatrix}
\dfrac{\beta_I Nk\left(1-q\right)}{\left(k+\mu\right)\left(\gamma_I+\mu\right)}+\dfrac{\beta_A N kq}{\left(k+\mu\right)\left(\gamma_A+\mu\right)} &\; & \dfrac{\beta_I N}{\left(\gamma_I+\mu\right)} &\;& \dfrac{\beta_A N}{\left(\gamma_A+\mu\right)}\\
0 &\;& 0  &\;& 0\\
0 &\;& 0 &\;& 0\\
\end{bmatrix},
\end{align*}
from which we obtain that
\begin{equation}
\mathcal{R}_0=\dfrac{\beta_I Nk\left(1-q\right)}{\left(k+\mu\right)\left(\gamma_I+\mu\right)}+\dfrac{\beta_A N kq}{\left(k+\mu\right)\left(\gamma_A+\mu\right)}.
\label{R0}
\end{equation}

\subsection{Local Stability Analysis of the SEIAR Model}
We proceed with the local stability analysis of the model. The Jacobian matrix of system \eqref{SEIAR} is 
\begin{equation}
\textbf{J}=\begin{bmatrix}
-\beta_I I-\beta_A A-\mu & 0 &  -\beta_I S & -\beta_A S & 0\\
\beta_I I+\beta_A A & -\left(k+\mu\right) &  \beta_I S & \beta_A S  & 0\\
0 & k\left(1-q\right)&  -\left(\gamma_I+\mu\right) & 0 & 0\\
0 & k q &  0 & -\left(\gamma_A+\mu\right) & 0\\
0 & 0 &  \gamma_I & \gamma_A & -\mu\\
\end{bmatrix}.
\label{jacobian}
\end{equation}

\begin{theorem}
If $\mathcal{R}_0<1$, the disease-free steady state, $\left(N,0,0,0,0\right)$, of system \eqref{SEIAR}  is locally stable.
\end{theorem}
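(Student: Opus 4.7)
The plan is to perform a standard local stability analysis by evaluating the Jacobian $\mathbf{J}$ from \eqref{jacobian} at the disease-free steady state $(N,0,0,0,0)$ and showing that all five eigenvalues have negative real parts when $\mathcal{R}_0 < 1$. Substituting $I=A=0$ and $S=N$ into \eqref{jacobian}, the first column collapses to $(-\mu,0,0,0,0)^{\mathrm{tr}}$ and the last column is $(0,0,0,0,-\mu)^{\mathrm{tr}}$, so the matrix is block-triangular in the $(S,R)$-block versus the $(E,I,A)$-block. This immediately yields two eigenvalues $\lambda_1 = \lambda_2 = -\mu < 0$, and the problem reduces to analyzing the $3\times 3$ block on the infected coordinates
\[
\mathbf{M} = \begin{bmatrix}
-(k+\mu) & \beta_I N & \beta_A N \\
k(1-q) & -(\gamma_I+\mu) & 0 \\
kq & 0 & -(\gamma_A+\mu)
\end{bmatrix}.
\]

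The key observation is that $\mathbf{M} = \mathbf{T}+\mathbf{\Sigma}$, precisely the sum of the transmission and transition matrices introduced in the derivation of $\mathcal{R}_0$. Since $\mathbf{T}$ is entrywise nonnegative and $-\mathbf{\Sigma}$ is a non-singular $M$-matrix (lower-triangular with positive diagonal entries $k+\mu,\gamma_I+\mu,\gamma_A+\mu$, nonpositive off-diagonal entries, and entrywise nonnegative inverse, as is visible from the explicit form of $\mathbf{\Sigma}^{-1}$ already computed), the classical theorem associated with the next-generation matrix construction (cf.\ \cite{Diekmann1990,Diekmann2010}) gives the equivalence $s(\mathbf{T}+\mathbf{\Sigma}) < 0 \;\Longleftrightarrow\; \rho(\mathbf{K}) < 1$, where $s(\cdot)$ denotes the spectral abscissa and $\mathbf{K}=-\mathbf{T}\mathbf{\Sigma}^{-1}$. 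Since $\mathcal{R}_0 = \rho(\mathbf{K}) < 1$ by hypothesis, every eigenvalue of $\mathbf{M}$ has negative real part, which together with the two $-\mu$ eigenvalues completes the argument.

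A concrete alternative, should a self-contained verification be preferred, is to apply the Routh--Hurwitz criterion directly to the characteristic polynomial $p(\lambda)=\lambda^3+a_1\lambda^2+a_2\lambda+a_3$ of $\mathbf{M}$. The signs of the diagonal entries of $\mathbf{M}$ make $a_1 = (k+\mu)+(\gamma_I+\mu)+(\gamma_A+\mu) > 0$ and $a_2 > 0$ immediate. The main obstacle lies in the constant term: expanding $a_3 = -\det(\mathbf{M})$ and factoring using \eqref{R0} yields the identity
\[
a_3 = (k+\mu)(\gamma_I+\mu)(\gamma_A+\mu)\left(1-\mathcal{R}_0\right),
\]
so $a_3>0$ precisely when $\mathcal{R}_0<1$. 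A short algebraic check then confirms the Hurwitz condition $a_1 a_2 > a_3$ under the same assumption, and local asymptotic stability follows.
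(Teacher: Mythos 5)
Your proposal is correct, and its primary route is genuinely different from the paper's. The paper, after extracting the double eigenvalue $-\mu$ and reducing to the $3\times 3$ infected block, writes down the cubic characteristic polynomial and checks the Routh--Hurwitz conditions, delegating the third (product) inequality to Mathematica. You instead observe that the reduced block is exactly $\mathbf{T}+\mathbf{\Sigma}$ and invoke the standard spectral equivalence $s(\mathbf{T}+\mathbf{\Sigma})<0 \Leftrightarrow \rho(-\mathbf{T}\mathbf{\Sigma}^{-1})<1$, valid because $\mathbf{T}\geq 0$ and $-\mathbf{\Sigma}$ is a nonsingular M-matrix (you verify the hypotheses correctly; the nonpositivity of the entries of $\mathbf{\Sigma}^{-1}$ computed in the paper confirms the inverse-positivity condition). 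This buys a fully rigorous, computation-free argument that also explains \emph{why} the threshold is $\mathcal{R}_0=1$, whereas the paper's route is elementary but leaves a symbolic verification to software. Your fallback Routh--Hurwitz route coincides with the paper's; two small caveats there: the positivity of $a_2$ is \emph{not} immediate from the diagonal signs alone, since $a_2$ contains the negative contributions $-\beta_I Nk(1-q)-\beta_A Nkq$ from the off-diagonal products --- it follows instead from $\mathcal{R}_0<1$, which forces each summand of $\mathcal{R}_0$ below $1$ and hence each mixed $2\times 2$ principal minor positive (and in any case $a_2>0$ is implied by $a_1>0$, $a_3>0$, $a_1a_2>a_3$). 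Likewise the ``short algebraic check'' of $a_1a_2>a_3$ is asserted rather than performed, but it is no less substantiated than the paper's appeal to Mathematica, and it does go through: writing $x=k+\mu$, $y=\gamma_I+\mu$, $z=\gamma_A+\mu$, $u=\beta_I Nk(1-q)$, $v=\beta_A Nkq$, one finds $a_1a_2-a_3=(x+y)(y+z)(z+x)-u(x+y)-v(x+z)$, which is bounded below by $2xyz+y^2z+yz^2>0$ using $u<xy$ and $v<xz$.
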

\begin{proof}
For the disease-free steady state $\left(N,0,0,0,0\right)$ we obtain a double negative eigenvalue, $\lambda_1=-\mu$, and the characteristic equation of the reduced 3x3 matrix\\
\begin{align*}
\lambda^3+&\biggl(\gamma_I+\gamma_A+k+3\mu\biggr)\lambda^2+\biggl[\left(k+\mu\right)\left(\gamma_I+\gamma_A+2\mu\right)+\left(\gamma_I+\mu\right)\left(\gamma_A+\mu\right)\\
&\hspace{2cm}-Nk\left(\beta_I\left(1-q\right)+\beta_A q\right)\biggr]\lambda+\left(k+\mu\right)\left(\gamma_I+\mu\right)\left(\gamma_A+\mu\right)\\
&\hspace{2cm}-Nk\biggl(\beta_I\left(1-q\right)\left(\gamma_A+\mu\right)+\beta_A q\left(\gamma_I+\mu\right)\biggr)=0.
\end{align*}

We prove the stability of this steady state using the Routh-Hurwitz criterion \cite{edelstein2005}. The disease-free steady state is stable if and only if 
\begin{align}
&\gamma_I+\gamma_A+k+3\mu>0,\label{rh_cond_1}\\
& \left(k+\mu\right)\left(\gamma_I+\mu\right)\left(\gamma_A+\mu\right)-\beta_I Nk\left(1-q\right)\left(\gamma_A+\mu\right)-\beta_A N kq\left(\gamma_I+\mu\right)>0,\label{rh_cond_2}\\
& \left(\gamma_I+\gamma_A+k+3\mu\right)\biggl[\left(k+\mu\right)\left(\gamma_I+\gamma_A+2\mu\right)+\left(\gamma_I+\mu\right)\left(\gamma_A+\mu\right)\nonumber\\
&\hspace{3cm}-Nk\biggl(\beta_I\left(1-q\right)+\beta_A q\biggr)\biggr]-\left(k+\mu\right)\left(\gamma_I+\mu\right)\left(\gamma_A+\mu\right)\nonumber\\
&\hspace{3cm}+Nk\biggl(\beta_I\left(1-q\right)\left(\gamma_A+\mu\right)+\beta_A q\left(\gamma_I+\mu\right)\biggr)>0.\label{rh_cond_3}
\end{align}
The inequality \eqref{rh_cond_1} always holds. The inequality \eqref{rh_cond_2} can be equivalently written as 
\begin{equation}
\mathcal{R}_0=\dfrac{\beta_I Nk\left(1-q\right)}{\left(k+\mu\right)\left(\gamma_I+\mu\right)}+\dfrac{\beta_A N kq}{\left(k+\mu\right)\left(\gamma_A+\mu\right)}<1.
\end{equation}
By using Mathematica \cite{Mathematica} we can confirm that the inequality \eqref{rh_cond_3} holds for $\mathcal{R}_0< 1$, thus the disease-free steady state is stable. 
\end{proof}

Since we incorporate the demographic terms, we are interested in exploring the longer-term persistence and the endemic dynamics of the disease. Setting equal to zero the right-hand side of system \eqref{SEIAR}, we find a unique endemic steady state.  Then, we are interested in determining the conditions necessary for endemic steady state stability. 

\begin{theorem}
If $\mathcal{R}_0> 1$, the endemic steady state, $\left(S^*,E^*,I^*,A^*,R^*\right)$, of system \eqref{SEIAR}  with
\begin{equation}
\begin{aligned}
        S^*&=\dfrac{\left(\gamma_I+\mu\right) \left(\gamma_A+\mu\right)\left(k+\mu\right)}{k\left(\beta_I\left(\gamma_A+\mu\right)+q\left(\beta_A\left(\gamma_I+\mu\right)-\beta_I\left(\gamma_A+\mu\right)\right)\right)}=\dfrac{N}{\mathcal{R}_0},\\
        E^*&=\dfrac{N\mu}{\left(k+\mu\right)}\left(1-\dfrac{1}{\mathcal{R}_0}\right), \\
        I^*&=\dfrac{Nk\left(1-q\right)\mu}{ \left(\gamma_I+\mu\right)\left(k+\mu\right)}\left(1-\dfrac{1}{\mathcal{R}_0}\right),\\
       A^*&=\dfrac{Nkq\mu}{ \left(\gamma_A+\mu\right)\left(k+\mu\right)}\left(1-\dfrac{1}{\mathcal{R}_0}\right),\\
        R^*&=\dfrac{Nk\biggl(q\gamma_A\mu+\gamma_I\left(\gamma_A+\mu-q\mu\right)\biggr)}{\left(\gamma_I+\mu\right) \left(\gamma_A+\mu\right)\left(k+\mu\right)}\left(1-\dfrac{1}{\mathcal{R}_0}\right),
        \label{endemic_ss}
\end{aligned}
\end{equation}
is locally stable.
\end{theorem}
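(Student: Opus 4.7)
The plan is to mirror the proof of the preceding theorem: evaluate the Jacobian \eqref{jacobian} at $(S^*,E^*,I^*,A^*,R^*)$, exploit the block structure to cut the problem down in size, and then apply the Routh--Hurwitz criterion to the resulting characteristic polynomial. Since the last column of $\textbf{J}$ contains only the entry $-\mu$ in row $5$, $\lambda_1=-\mu$ is automatically an eigenvalue, and the rest of the spectrum is governed by the $4\times 4$ upper-left principal submatrix, call it $\textbf{J}_4$.

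Rather than substituting the explicit formulas \eqref{endemic_ss} directly into $\textbf{J}_4$, I would use the equilibrium identities $S^*=N/\mathcal{R}_0$ and, from \eqref{SEIAR;a}, $\beta_I I^* + \beta_A A^* = \mu(\mathcal{R}_0 - 1)$. With these, the $(1,1)$ entry collapses to $-\mu\mathcal{R}_0$, the $(2,1)$ entry becomes $\mu(\mathcal{R}_0-1)$, and the remaining couplings in the first two columns become simple multiples of $\beta_{I,A} N/\mathcal{R}_0$. Expanding $\det(\textbf{J}_4 - \lambda \textbf{I})$ along the first column (which has only two nonzero entries) produces a quartic
\begin{equation*}
\lambda^4 + a_1 \lambda^3 + a_2 \lambda^2 + a_3 \lambda + a_4 = 0,
\end{equation*}
whose coefficients are polynomials in $\mu,k,q,\gamma_I,\gamma_A,\beta_I,\beta_A$ depending affinely on $\mathcal{R}_0$.

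Local stability then reduces to the Routh--Hurwitz conditions for quartics, namely $a_i>0$ for $i=1,\dots,4$ and $a_1 a_2 a_3 > a_3^2 + a_1^2 a_4$. The coefficient $a_1 = k+\gamma_I+\gamma_A+3\mu+\mu\mathcal{R}_0$ is manifestly positive. The constant term $a_4 = \det \textbf{J}_4$ is expected to factor as a positive multiple of $\mathcal{R}_0-1$, reflecting the transcritical bifurcation at $\mathcal{R}_0=1$; likewise $a_2$ and $a_3$ should split into an unconditionally positive \emph{demographic} piece plus a piece proportional to $\mathcal{R}_0-1$, yielding positivity whenever $\mathcal{R}_0>1$.

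The main obstacle is the last Routh--Hurwitz inequality $a_1 a_2 a_3 - a_3^2 - a_1^2 a_4 > 0$, a polynomial inequality in seven parameters that does not admit an obvious grouping by hand. I expect to dispatch it exactly as the authors handled \eqref{rh_cond_3}: by symbolic verification in \textsc{Mathematica}, checking that the expanded difference decomposes into a sum of monomials that are nonnegative (and not identically zero) whenever $\mathcal{R}_0>1$.
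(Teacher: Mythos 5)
Your proposal follows essentially the same route as the paper: factoring out the eigenvalue $-\mu$, reducing to a quartic characteristic polynomial whose leading Routh--Hurwitz coefficient is $\gamma_I+\gamma_A+k+3\mu+\mu\mathcal{R}_0$ and whose constant term is positive precisely when $\mathcal{R}_0>1$, and then verifying the remaining Routh--Hurwitz inequalities symbolically in Mathematica. This is exactly the paper's argument, so the proposal is correct and not materially different.
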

\begin{proof}
The characteristic equation of the Jacobian matrix \eqref{jacobian} at the endemic steady state is
\begin{equation*}
    \lambda^4+\alpha_3\lambda^3+\alpha_2\lambda^2+\alpha_1\lambda+\alpha_0=0,
\end{equation*}
with
\begin{align*}
\alpha_3=&\gamma_I+\gamma_A+k+3\mu+\mu \mathcal{R}_0>0,\\
\alpha_2=&\mu \mathcal{R}_0\left(\gamma_I+\gamma_A+k+3\mu\right)+\left(\gamma_I+\gamma_A+2\mu\right)\left(k+\mu\right)+\left(\gamma_I+\mu\right)\left(\gamma_A+\mu\right)\\
&-\dfrac{N}{\mathcal{R}_0}k\left(\beta_I\left(1-q\right)+\beta_Aq\right),\\
\alpha_1=&\mu \mathcal{R}_0\left(\gamma_I+\gamma_A+2\mu\right)\left(k+\mu\right)+\left(\gamma_I+\mu\right)\left(\gamma_A+\mu\right)\left(\mu \mathcal{R}_0+k+\mu\right)\\
&-\dfrac{N}{\mathcal{R}_0}k\left(\beta_I\left(1-q\right)\left(\gamma_A+\mu+1\right)+\beta_Aq\left(\gamma_I+\mu+1\right)\right),\\
\alpha_0=&\mu \mathcal{R}_0\left(k+\mu\right)\left(\gamma_I+\mu\right)\left(\gamma_A+\mu\right)-\mu\dfrac{N}{\mathcal{R}_0}k\left(\beta_I\left(1-q\right)\left(\gamma_A+\mu\right)+\beta_Aq\left(\gamma_I+\mu\right)\right).
\end{align*}
From the Routh-Hurwitz criterion, the endemic steady \eqref{endemic_ss} is locally stable if and only if
\begin{equation*}
\alpha_0>0,\;\alpha_1>0,\;\alpha_3>0 \; \textnormal{and} \; \alpha_1\alpha_2\alpha_3-\alpha_1^2-\alpha_0\alpha_3^2>0.
\end{equation*}
We always have that $\alpha_3>0$, whereas $a_0>0$ is equivalent to $\mathcal{R}_0>1$.
By using Mathematica \cite{Mathematica} we can confirm that the rest of the above relations hold for $\mathcal{R}_0>1$, thus the endemic steady state is stable.
\end{proof}

\subsection{Global Stability Analysis of the SEIAR Model}

\begin{theorem}
If $\mathcal{R}_0\leq 1$, then the disease-free steady state, $\left(N,0,0,0,0\right)$, of system \eqref{SEIAR} is globally asymptotically stable. 
\label{global_stability_dfss}
\end{theorem}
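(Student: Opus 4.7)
The plan is to prove global asymptotic stability by constructing a linear Lyapunov function on the infected compartments and then invoking LaSalle's invariance principle on the positively invariant simplex
\[
\Gamma=\left\{\left(S,E,I,A,R\right)\in\left(\mathbb{R}_{0}^{+}\right)^{5}:\; S+E+I+A+R=N\right\},
\]
whose invariance is immediate since summing the five equations of \eqref{SEIAR} yields $\mathrm{d}N/\mathrm{d}t=0$, so in particular $S\leq N$ throughout $\Gamma$.

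First I would introduce the weighted sum of infected classes
\[
V=E+\frac{\beta_{I}N}{\gamma_{I}+\mu}\,I+\frac{\beta_{A}N}{\gamma_{A}+\mu}\,A,
\]
whose coefficients are essentially the components of the left eigenvector of $\mathbf{K}$ associated with $\mathcal{R}_{0}$. Differentiating along trajectories of \eqref{SEIAR} and collecting terms by compartment, the coefficients of $I$ and $A$ reduce respectively to $\beta_{I}(S-N)$ and $\beta_{A}(S-N)$, while the coefficient of $E$ simplifies, using the definition \eqref{R0}, to $(k+\mu)(\mathcal{R}_{0}-1)$. Hence
\[
\dot{V}=\left(k+\mu\right)\left(\mathcal{R}_{0}-1\right)E-\beta_{I}\left(N-S\right)I-\beta_{A}\left(N-S\right)A,
\]
which is manifestly $\leq 0$ on $\Gamma$ whenever $\mathcal{R}_{0}\leq 1$, since $S\leq N$ and $E,I,A\geq 0$.

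Next I would apply LaSalle's invariance principle and identify the largest invariant subset $M$ of the locus $\{\dot{V}=0\}$. In the strict case $\mathcal{R}_{0}<1$, $\dot V=0$ forces $E=0$ and $(N-S)I=(N-S)A=0$; invariance of $E\equiv 0$ then gives $\beta_{I}SI+\beta_{A}SA=0$, and together with the fact that $S\equiv 0$ is impossible (it would yield $\dot S=\mu N>0$), one deduces $I\equiv A\equiv 0$. In the borderline case $\mathcal{R}_{0}=1$, vanishing of $\dot V$ only requires $(N-S)I=(N-S)A=0$; either $S\equiv N$, which via $\dot S=0$ gives $I=A=0$ and then $E=0$ from the $I$ and $A$ equations (using $q\in(0,1)$ or separate handling of the trivial endpoints), or $I\equiv A\equiv 0$, which again forces $E=0$. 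Either way, $M\subset\{E=I=A=0,\;S+R=N\}$.

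Finally I would promote the reduction on $M$ to convergence to the unique disease-free equilibrium: on $M$ the system collapses to $\dot S=\mu(N-S)$ and $\dot R=-\mu R$, both of whose solutions converge exponentially to $N$ and $0$ respectively, so every trajectory of \eqref{SEIAR} starting in $\Gamma$ has $\omega$-limit set equal to $\{(N,0,0,0,0)\}$. Combined with the local stability already established for $\mathcal{R}_{0}<1$, this yields global asymptotic stability. The main obstacle I anticipate is the $\mathcal{R}_{0}=1$ limiting case, where the Lyapunov derivative loses its negative-definite term in $E$ and one must argue invariance carefully to rule out non-trivial orbits on the boundary of $\Gamma$; the rest of the argument is routine once the Lyapunov weights are chosen correctly.
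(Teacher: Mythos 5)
Your proof is correct, but it takes a genuinely different route from the paper's. The paper uses a Volterra--Goh-type Lyapunov function that includes the susceptible class through the term $S-N-N\ln\tfrac{S}{N}$, namely $\mathcal{V}_1=\bigl(S-N-N\ln\tfrac{S}{N}\bigr)+E+\tfrac{N\beta_I}{\gamma_I+\mu}I+\tfrac{N\beta_A}{\gamma_A+\mu}A$; this makes the cross terms in $I$ and $A$ cancel exactly, leaving $\mathcal{V}_1^\prime=-\mu N\bigl(\tfrac{S}{N}+\tfrac{N}{S}-2\bigr)+(k+\mu)(\mathcal{R}_0-1)E\leq 0$ by AM--GM, with no need to restrict to the simplex $\Gamma$: the $S$-term of $\mathcal{V}_1^\prime$ is nonpositive everywhere and vanishes only at $S=N$, so the equilibrium value of $S$ is pinned down by the Lyapunov function itself. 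Your linear functional $V=E+\tfrac{\beta_IN}{\gamma_I+\mu}I+\tfrac{\beta_AN}{\gamma_A+\mu}A$ (the left-eigenvector construction) is simpler and more standard for threshold theorems, but it buys this simplicity at the cost of (i) needing $S\leq N$, hence the explicit restriction to the positively invariant set $\Gamma$, and (ii) a more involved LaSalle step, since $\{\dot V=0\}$ contains the entire disease-free subspace $\{E=I=A=0\}$ and you must separately argue that the only full orbit of the reduced system $\dot S=\mu(N-S)$, $\dot R=-\mu R$ that stays in the compact set $\Gamma$ for all backward time is the equilibrium itself. You handle both points, and in fact your treatment of the borderline case $\mathcal{R}_0=1$ is more careful than the paper's, which asserts that $\mathcal{V}_1^\prime=0$ only at $(N,0,0,0)$ even though at $\mathcal{R}_0=1$ the derivative also vanishes on all of $\{S=N\}$ and the same invariance argument is silently needed. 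The one informal step on your side is the dichotomy ``either $S\equiv N$ or $I\equiv A\equiv 0$'' in the case $\mathcal{R}_0=1$: pointwise the condition $(N-S)I=(N-S)A=0$ does not immediately split globally into these two alternatives, and a fully rigorous version should argue on the open set where $S<N$ and on intervals where $S=N$ separately (both branches force $E=I=A=0$, so the conclusion stands). This is a fixable presentational issue, not a gap.
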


\begin{proof}
We prove the global stability of the disease-free steady state $\left(N,0,0,0,0\right)$ by constructing a Lyapunov function. We consider the function $\mathcal{V}_1:  \mathbb{R^+}\times\mathbb{R}^3 \to\mathbb{R}$ with
\begin{equation*}
\mathcal{V}_1\left(S\left(t\right),E\left(t\right),I\left(t\right),A\left(t\right)\right)=\left(S-N-N\ln{\dfrac{S}{N}}\right)+E+\dfrac{N\beta_I}{\gamma_I+\mu} I+\dfrac{N\beta_A}{\gamma_A+\mu} A. 
\end{equation*}
We take the derivative of $\mathcal{V}_1$ with respect to $t$:
\begin{align*}
 \mathcal{V}_1^\prime&=S^\prime\left(1-\dfrac{N}{S}\right)+E^\prime+\dfrac{N\beta_I}{\gamma_I+\mu}I^\prime+\dfrac{N\beta_A}{\gamma_A+\mu}A^\prime\\
 &=2\mu N-\mu S-\dfrac{\mu N^2}{S}+E\left(k+\mu\right)\left(\dfrac{\beta_I Nk\left(1-q\right)}{\left(k+\mu\right)\left(\gamma_I+\mu\right)}+\dfrac{\beta_A N kq}{\left(k+\mu\right)\left(\gamma_A+\mu\right)}-1\right)\\
 &=-\mu N\left(\dfrac{S}{N}+\dfrac{N}{S}-2\right)+E\left(k+\mu\right)\left(\mathcal{R}_0-1\right).
\end{align*}
From the arithmetic–geometric mean inequality we have
\begin{equation*}
\dfrac{1}{2}\left(\dfrac{S}{N}+\dfrac{N}{S} \right)\geq \sqrt[2]{\dfrac{S}{N}\dfrac{N}{S}}=1\Rightarrow \dfrac{S}{N}+\dfrac{N}{S}-2\geq 0.    
\end{equation*}
Thus, if $\mathcal{R}_0\leq1$ then $\mathcal{V}_1^\prime\leq 0$  for all $t\geq 0$ and $\left(S,E,I,A\right)\in\mathbb{R^+}\times\mathbb{R}^3$ sufficiently close to  $\left(N,0,0,0\right)$, and $\mathcal{V}_1^\prime\left(t\right)=0$ holds only for $\left(S,E,I,A\right)=\left(N,0,0,0\right)$. Hence, the singleton $\{\left(N,0,0,0\right)\}$ is the largest invariant set for which $\mathcal{V}_1^\prime=0$. Then, from LaSalle's Invariance Principle \cite{lassalle1976} it follows that the disease-free steady state is globally asymptotically stable.
\end{proof}

\begin{theorem}
If $\mathcal{R}_0> 1$, then the endemic steady state, $\left(S^*,E^*,I^*,A^*,R^*\right)$, of system \eqref{SEIAR}  is globally asymptotically stable.
\label{global_stability_endss}
\end{theorem}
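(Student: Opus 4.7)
The plan is to construct a Volterra-type Lyapunov function for the reduced $(S, E, I, A)$-subsystem, exploiting that equations \eqref{SEIAR;a}--\eqref{SEIAR;d} do not involve $R$; once $(S,E,I,A) \to (S^*,E^*,I^*,A^*)$ is established, the scalar linear equation \eqref{SEIAR;e} immediately drives $R$ to $R^* = (\gamma_I I^* + \gamma_A A^*)/\mu$. I would try
$$\mathcal{V}_2 = g(S, S^*) + g(E, E^*) + c_1\,g(I, I^*) + c_2\,g(A, A^*),$$
with $g(X, X^*) := X - X^* - X^* \ln(X/X^*)$, which is non-negative and vanishes only at $X = X^*$, and positive weights $c_1, c_2$ to be determined.

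Differentiating $\mathcal{V}_2$ along the flow and substituting the four equilibrium identities
\begin{align*}
\mu N &= \mu S^* + \beta_I S^* I^* + \beta_A S^* A^*, & (k+\mu) E^* &= \beta_I S^* I^* + \beta_A S^* A^*, \\
(\gamma_I + \mu) I^* &= k(1-q) E^*, & (\gamma_A + \mu) A^* &= k q E^*,
\end{align*}
I would fix
$$c_1 = \frac{\beta_I S^*}{\gamma_I + \mu}, \qquad c_2 = \frac{\beta_A S^*}{\gamma_A + \mu}.$$
This is the unique choice that makes the linear contributions $\beta_I S^* I$ and $\beta_A S^* A$ from the $S$-equation cancel against the terms $-c_i(\gamma_X + \mu) X$ coming from the $I$- and $A$-equations. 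With these weights, the bilinear interactions $\beta_I S I$ and $\beta_A S A$ cancel automatically, and the coefficient of $E$ collapses to zero upon using $S^* = N/\mathcal{R}_0$ together with definition \eqref{R0}. The derivative should then reduce to
\begin{align*}
\mathcal{V}_2' =\ & -\mu\,\frac{(S - S^*)^2}{S} \\
& + \beta_I S^* I^* \left[3 - \frac{S^*}{S} - \frac{S I E^*}{S^* I^* E} - \frac{E I^*}{E^* I}\right] \\
& + \beta_A S^* A^* \left[3 - \frac{S^*}{S} - \frac{S A E^*}{S^* A^* E} - \frac{E A^*}{E^* A}\right].
\end{align*}

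In each bracket, the three subtracted positive ratios multiply to $1$, so the arithmetic–geometric mean inequality makes each bracket non-positive; hence $\mathcal{V}_2' \leq 0$ on $\left(\mathbb{R}^+\right)^4$. Equality forces $S = S^*$ and every ratio in the brackets to equal $1$, which pins $E/E^* = I/I^* = A/A^*$; imposing $S' = 0$ on this set collapses the common ratio to $1$, so the largest invariant subset is $\{(S^*, E^*, I^*, A^*)\}$. LaSalle's Invariance Principle then delivers global asymptotic stability of the $(S,E,I,A)$-subsystem, and $R \to R^*$ follows from the driven scalar equation. The main obstacle will be purely algebraic bookkeeping: one has to verify that the weights $c_1, c_2$ identified above simultaneously kill the bilinear cross-terms, annihilate the $E$-coefficient (this is precisely where the expression \eqref{R0} for $\mathcal{R}_0$ enters), and repackage the residual terms into the two three-ratio AM–GM brackets. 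Once the correct weights are pinned down from the linear-cancellation criterion, the remaining steps are essentially forced.
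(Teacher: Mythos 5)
Your proposal is correct and follows essentially the same route as the paper: the identical Volterra-type Lyapunov function with weights $\frac{\beta_I S^*}{\gamma_I+\mu}$ and $\frac{\beta_A S^*}{\gamma_A+\mu}$, the same reduction of $\mathcal{V}_2'$ to a $(S-S^*)^2/S$ term plus two three-ratio AM--GM brackets, and LaSalle to conclude. Your treatment of the set $\{\mathcal{V}_2'=0\}$ (noting it only pins $E/E^*=I/I^*=A/A^*$ and that invariance under the $S$-equation is needed to collapse the common ratio to $1$) is in fact slightly more careful than the paper's, which asserts directly that equality holds only at the endemic steady state.
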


\begin{proof}
We consider the function $\mathcal{V}_2:\left( \mathbb{R^+} \right)^4 \to\mathbb{R}$ with
\begin{align*}
\mathcal{V}_2\left(S\left(t\right),E\left(t\right),I\left(t\right),A\left(t\right)\right)=&\left(S-S^*-S^*\ln{\dfrac{S}{S^*}}\right)+\left(E-E^*-E^*\ln{\dfrac{E}{E^*}}\right)\nonumber\\&+\dfrac{\beta_IS^*}{\gamma_I+\mu}\left(I-I^*-I^*\ln{\dfrac{I}{I^*}}\right)+\dfrac{\beta_AS^*}{\gamma_A+\mu}\left(A-A^*-A^*\ln{\dfrac{A}{A^*}}\right).
\end{align*}
We take the derivative of $\mathcal{V}_2$ with respect to $t$:
\begin{align*}
 \mathcal{V}_2^\prime=&S^\prime\left(1-\dfrac{S^*}{S}\right)+E^\prime\left(1-\dfrac{E^*}{E}\right)+\dfrac{\beta_IS^*}{\gamma_I+\mu}I^\prime\left(1-\dfrac{I^*}{I}\right)+\dfrac{\beta_AS^*}{\gamma_A+\mu}A^\prime\left(1-\dfrac{A^*}{A}\right)\\
 =&\left(\mu N-\beta_I SI-\beta_A SA-\mu S\right)\left(1-\dfrac{S^*}{S}\right)+\left(\beta_I SI+\beta_A SA -\left(k+\mu\right) E\right)\left(1-\dfrac{E^*}{E}\right)
 \\&+\dfrac{\beta_IS^*}{\gamma_I+\mu}\left(k \left(1-q\right)E-\left(\gamma_I+\mu\right)I\right)\left(1-\dfrac{I^*}{I}\right)+\dfrac{\beta_AS^*}{\gamma_A+\mu}\left(k q E-\left(\gamma_A+\mu\right)A\right)\left(1-\dfrac{A^*}{A}\right).
\end{align*}
After using the relations 
\[
\mu N=\beta_I S^*I^*+\beta_A S^*A^*+\mu S^*, 
\]
and
\[
\beta_I S^*I^*+\beta_A S^*A^* =\left(k+\mu\right) E^*, k\left(1-q\right)E^*=\left(\gamma_I+\mu\right)I^*, kqE^*=\left(\gamma_A+\mu\right)A^*,
\]
and adding and subtracting the terms $\dfrac{\beta_I S^*I^{*^2}E}{IE^*}$ and $\dfrac{\beta_A S^*A^{*^2}E}{AE^*}$ we have\\
\begin{align*}
 \mathcal{V}_2^\prime=&-\mu S^*\left(\dfrac{S}{S^*}+\dfrac{S^*}{S}-2\right) -\beta_I S^*I^*\left(\dfrac{S^*}{S}+\dfrac{S}{S^*}\dfrac{I}{I^*}\dfrac{E^*}{E}+\dfrac{I^*}{I}\dfrac{E}{E^*}-3\right)\\
 &-\beta_A S^*A^*\left(\dfrac{S^*}{S}+\dfrac{S}{S^*}\dfrac{A}{A^*}\dfrac{E^*}{E}+\dfrac{A^*}{A}\dfrac{E}{E^*}-3\right).
\end{align*}
From the arithmetic–geometric mean inequality we have that 
\begin{equation*}
 \dfrac{1}{3}\left(\dfrac{S^*}{S}+\dfrac{S}{S^*}\dfrac{I}{I^*}\dfrac{E^*}{E}+\dfrac{I^*}{I}\dfrac{E}{E^*} \right)\geq \sqrt[3]{\dfrac{S^*}{S}\dfrac{S}{S^*}\dfrac{I}{I^*}\dfrac{E^*}{E}\dfrac{I^*}{I}\dfrac{E}{E^*}}=1,
\end{equation*}
and
\begin{equation*}
 \dfrac{1}{3}\left(\dfrac{S^*}{S}+\dfrac{S}{S^*}\dfrac{A}{A^*}\dfrac{E^*}{E}+\dfrac{A^*}{A}\dfrac{E}{E^*} \right)\geq \sqrt[3]{\dfrac{S^*}{S}\dfrac{S}{S^*}\dfrac{A}{A^*}\dfrac{E^*}{E}\dfrac{A^*}{A}\dfrac{E}{E^*}}=1,
\end{equation*}
hence $\mathcal{V}_2^\prime\leq 0$ for all $\left(S,E,I,A\right)\in\left( \mathbb{R^+} \right)^4$, and the equality holds only for the endemic steady state  $\left(S^*,E^*,I^*,A^*\right)$. We conclude again from LaSalle's Invariance Principle that the endemic steady state is globally asymptotically stable.
\end{proof}
\subsection{Numerical Simulations for the SEIAR Model}
We proceed to the estimation of the already known epidemic curve of the disease in Italy, as obtained from the data set \cite{ECDC}, by $\dfrac{\mathrm{d}R}{\mathrm{d}t}$ (see, e.g., \cite{kermmck1927} and \cite{braun1993differential}). We plot together the two functions in Fig.~\ref{SEIAR_fit_data_italy}. The total population of Italy is 60,456,999. Once the restriction of movement (quarantine) during the manifestation of COVID-19 was applied, it limited the spread of the disease. To this end we follow the approach in \cite{Ndairou2020} and we consider as the total population N = 60,456,999/250.


\begin{figure}[h!]
        \centering
        \includegraphics[height=5cm]{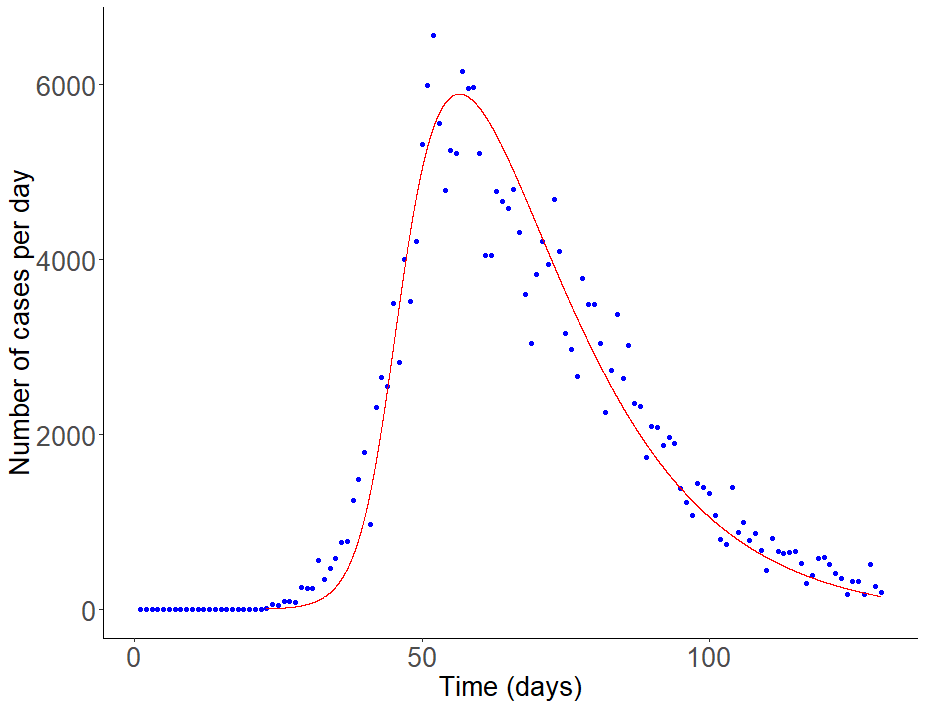}
       \caption{Number of confirmed cases per day in Italy until July 2020. The blue dots represents the data obtained from \cite{ECDC} and the red curve the graph of $\dfrac{\mathrm{d}R}{\mathrm{d}t}$, as obtained  by \eqref{SEIAR}. The parameters used here are: $\beta_I=2.55/N, \beta_A=1.275/N, k=0.07, \mu=0.001, \gamma_I=0.0625, \gamma_A=0.083, q=0.425$, and initial conditions (ICs): $S_0=N-200, I_0=100, A_0=100, E_0=R_0=0$.}
\label{SEIAR_fit_data_italy}      
\end{figure}

In Fig.~\ref{SEIAR_dynamics} we show the dynamics of the proportion of the values of model \eqref{SEIAR} for the set of parameters used in Fig.~\ref{SEIAR_fit_data_italy}.  We see in Fig.~\ref{SEIAR_dynamics;a}  that for this set of parameters the solution of the system has an oscillatory behaviour towards the endemic steady state. This can be more clear in Fig.~\ref{SEIAR_dynamics;b}, where the proportion of the infectious population (both symptomatic and asymptomatic) oscillates towards the proportion of the steady state $I^*+A^*$.

\begin{figure}[h!]
\begin{adjustwidth}{-.5in}{-.5in}  
 \centering
    \begin{subfigure}[b]{0.55\textwidth}
        \centering
        \includegraphics[height=5cm]{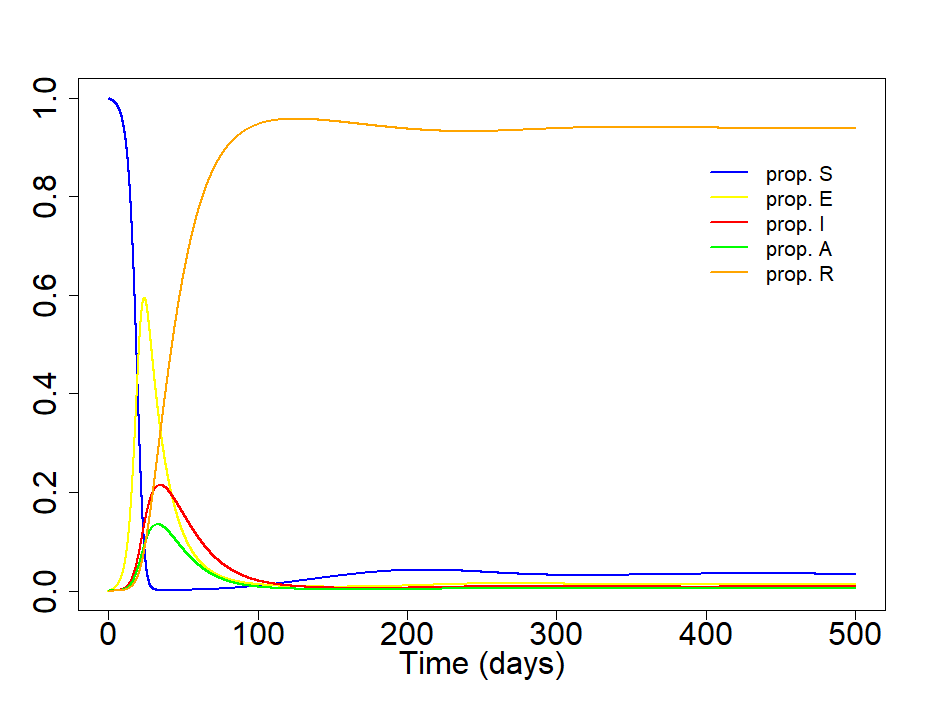}
        \caption{}
        \label{SEIAR_dynamics;a}  
    \end{subfigure}%
    ~ 
    \begin{subfigure}[b]{0.55\textwidth}
        \centering
        \includegraphics[height=5cm]{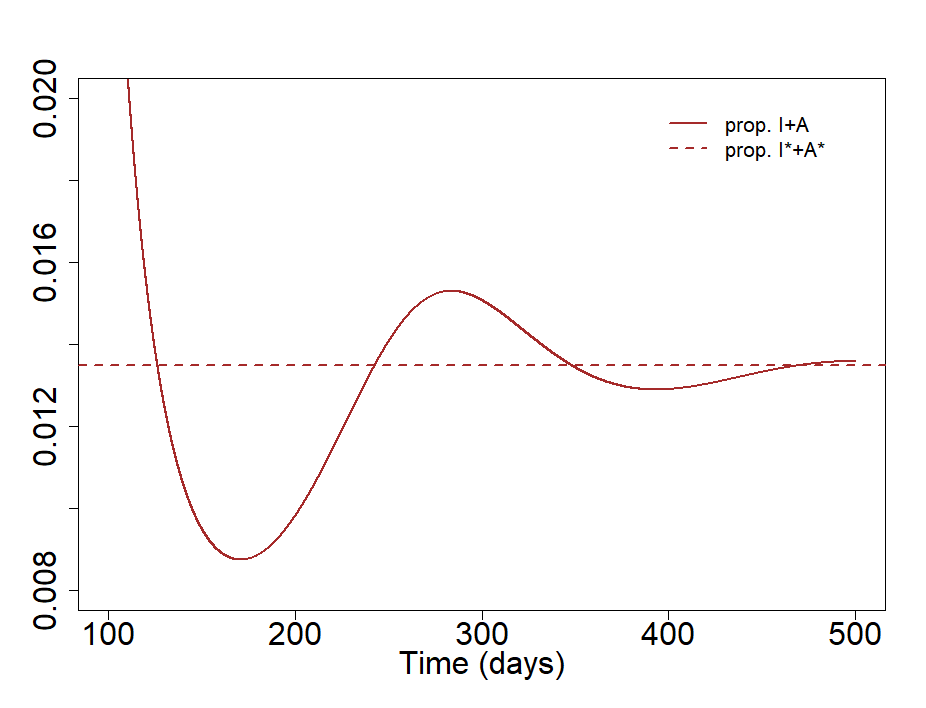}
        \caption{}
        \label{SEIAR_dynamics;b}  
    \end{subfigure}
    \end{adjustwidth}
  \caption{The dynamics of the proportion of the values of model \eqref{SEIAR}. (a) For all subgroups; (b) For the groups of both symptomatic and asymptomatic infectious individuals, $I+A$, towards the steady state $I^*+A^*$. The parameters used are the same as in Fig.~\ref{SEIAR_fit_data_italy} (see also Table~\ref{table_par}). The initial conditions are: $S_0=1-0.0008, I_0=0.0004, A_0=0.0004, E_0=R_0=0$.}
\label{SEIAR_dynamics}       
\end{figure}
\section{Modelling Transmission Dynamics of COVID-19 in a Vaccinated Population}
\label{modelling_vaccine}
In this section we consider the subclass of the vaccinated-with-a-prophylactic-vaccine ($V$) individuals. We set $0\leq p\leq 1$ for the vaccine coverage, as well as $0\leq \epsilon<1$ for the vaccine efficacy \cite{mclean1995}. Then, the model becomes
\begin{subequations}
\label{SVEIAR}
\begin{align}
\dfrac{\mathrm{d}S}{\mathrm{d}t}&=\left(1-p\right)\mu N-\beta_I SI-\beta_A SA  -\mu S,\label{SVEIAR;a}\\
\dfrac{\mathrm{d}V}{\mathrm{d}t}&=p\mu N-\left(1-\epsilon\right)\left(\beta_I VI+\beta_A VA\right)-\mu V,\label{SVEIAR;b}\\
\dfrac{\mathrm{d}E}{\mathrm{d}t}&=\left(S+\left(1-\epsilon\right)V\right)\left(\beta_I I+\beta_A A\right) -\left(k+\mu\right) E,\label{SVEIAR;c}\\
\dfrac{\mathrm{d}I}{\mathrm{d}t}&=k \left(1-q\right)E-\left(\gamma_I+\mu\right)I,\label{SVEIAR;d}\\
\dfrac{\mathrm{d}A}{\mathrm{d}t}&=k q E-\left(\gamma_A+\mu\right)A,\label{SVEIAR;e}\\
\dfrac{\mathrm{d}R}{\mathrm{d}t}&=\gamma_I I+\gamma_A A-\mu  R,\label{SVEIAR;f}
\end{align}
\end{subequations}
along with the initial conditions:
\begin{equation}
\left(S\left(0\right),V\left(0\right),E\left(0\right),I\left(0\right),A\left(0\right),R\left(0\right)\right)=\left(S_0,V_0,E_0,I_0,A_0,R_0\right)\in \left( \mathbb{R}_{0}^{+} \right)^6.
\end{equation}
Following the same steps as before and using the disease-free steady state of the model, $\left(\left(1-p\right)N, pN, 0, 0, 0, 0\right)$, we have that the basic reproductive ratio for the model where vaccination is applied is
\begin{equation}
\mathcal{R}_0^V=\left(1-\epsilon p\right)\biggl(\dfrac{\beta_I Nk\left(1-q\right)}{\left(k+\mu\right)\left(\gamma_I+\mu\right)}+\dfrac{\beta_A Nkq}{\left(k+\mu\right)\left(\gamma_A+\mu\right)}\biggr)=\left(1-\epsilon p\right)\mathcal{R}_0.
\label{R0V}
\end{equation}
The endemic steady state, $\left(S^*,V^*,E^*,I^*,A^*,R^*\right)$, of model \eqref{SVEIAR} is\\
\begin{align*}
&\Biggl(\dfrac{\left(1-p\right)N}{\mathcal{R}_0^V},\dfrac{pN}{\mathcal{R}_0^V},\dfrac{N\mu}{\left(k+\mu\right)}\left(1-\dfrac{1}{\mathcal{R}_0^V}\right),\dfrac{\left(1-\epsilon p\right)Nk\left(1-q\right)\mu}{ \left(\gamma_I+\mu\right)\left(k+\mu\right)}\left(1-\dfrac{1}{\mathcal{R}_0^V}\right),\nonumber\\& \dfrac{\left(1-\epsilon p\right)Nkq\mu}{ \left(\gamma_A+\mu\right)\left(k+\mu\right)}\left(1-\dfrac{1}{\mathcal{R}_0^V}\right), \dfrac{\left(1-\epsilon p\right)Nk\biggl(q\gamma_A\mu+\gamma_I\left(\gamma_A+\mu-q\mu\right)\biggr)}{\left(\gamma_I+\mu\right) \left(\gamma_A+\mu\right)\left(k+\mu\right)}\left(1-\dfrac{1}{\mathcal{R}_0^V}\right)\Biggr).
\end{align*}

\begin{figure}[h!]
\centering
\sidecaption[t]
\includegraphics[scale=0.95]{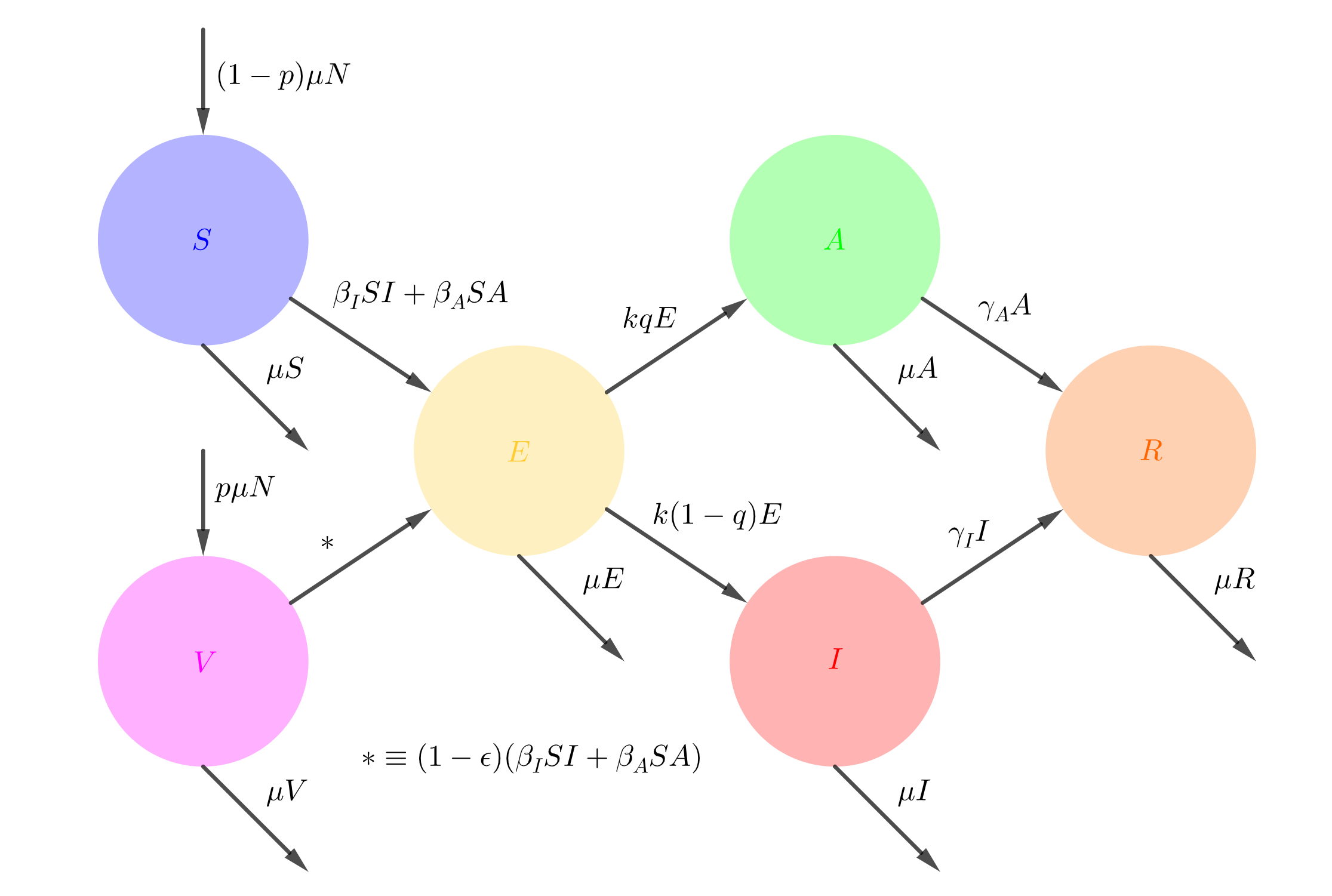}
\caption{Flow diagram of the SVEIAR model \eqref{SVEIAR}.}
\label{SVEIAR_flow_diagram}     
\end{figure}
We prove the global stability of the model following the same steps as before.
\begin{theorem}
If $\mathcal{R}_0^V\leq 1$, then the disease-free steady state, $\left(\left(1-p\right)N, pN, 0, 0, 0, 0\right)$, of system \eqref{SVEIAR} is globally asymptotically stable. 
\end{theorem}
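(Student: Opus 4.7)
The plan is to mimic the proof of Theorem~\ref{global_stability_dfss}, constructing a Lyapunov function that now carries a second logarithmic term for the vaccinated compartment. Writing $S^\circ=(1-p)N$ and $V^\circ=pN$ for the coordinates of the disease-free steady state, I would propose
\begin{equation*}
\mathcal{V}_3=\left(S-S^\circ-S^\circ\ln\dfrac{S}{S^\circ}\right)+\left(V-V^\circ-V^\circ\ln\dfrac{V}{V^\circ}\right)+E+\dfrac{\beta_I N\left(1-\epsilon p\right)}{\gamma_I+\mu}I+\dfrac{\beta_A N\left(1-\epsilon p\right)}{\gamma_A+\mu}A,
\end{equation*}
noting that $\mathcal{V}_3\geq 0$ with equality only at the disease-free steady state because each $x-x^\circ-x^\circ\ln(x/x^\circ)$ summand is non-negative on $\mathbb{R}^+$.

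First I would differentiate $\mathcal{V}_3$ along trajectories of \eqref{SVEIAR} and use the equilibrium identities $\left(1-p\right)\mu N=\mu S^\circ$ and $p\mu N=\mu V^\circ$ to collect the purely demographic terms into
\begin{equation*}
-\mu S^\circ\left(\dfrac{S}{S^\circ}+\dfrac{S^\circ}{S}-2\right)-\mu V^\circ\left(\dfrac{V}{V^\circ}+\dfrac{V^\circ}{V}-2\right),
\end{equation*}
which is non-positive by the arithmetic-geometric mean inequality exactly as in Theorem~\ref{global_stability_dfss}. The key algebraic step is that the cross-terms coming from $S'(1-S^\circ/S)$ and $V'(1-V^\circ/V)$ contribute $\left(\beta_I I+\beta_A A\right)\bigl[S^\circ+\left(1-\epsilon\right)V^\circ\bigr]$, and the factor $S^\circ+\left(1-\epsilon\right)V^\circ=N\left(1-\epsilon p\right)$ is precisely what appears in the infection term of $E'$ when $S,V$ are at their DFSS values. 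After adding $E'$ and the scaled $I',A'$ contributions, the coefficients of $I$ and $A$ are engineered by my choice of weights to vanish, and the remaining coefficient of $E$ simplifies to
\begin{equation*}
-\left(k+\mu\right)+\left(k+\mu\right)\left(1-\epsilon p\right)\mathcal{R}_0=\left(k+\mu\right)\left(\mathcal{R}_0^V-1\right).
\end{equation*}

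Putting these pieces together yields
\begin{equation*}
\mathcal{V}_3^\prime=-\mu S^\circ\left(\dfrac{S}{S^\circ}+\dfrac{S^\circ}{S}-2\right)-\mu V^\circ\left(\dfrac{V}{V^\circ}+\dfrac{V^\circ}{V}-2\right)+\left(k+\mu\right)\left(\mathcal{R}_0^V-1\right)E,
\end{equation*}
which is $\leq 0$ whenever $\mathcal{R}_0^V\leq 1$. Finally, I would apply LaSalle's Invariance Principle: on the locus $\mathcal{V}_3^\prime=0$ we must have $S=S^\circ$ and $V=V^\circ$; substituting back into \eqref{SVEIAR;a} forces $\left(\beta_I I+\beta_A A\right)S^\circ=0$, so $I=A=0$, and then \eqref{SVEIAR;c} and \eqref{SVEIAR;f} drive $E$ and $R$ to zero on the invariant set. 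The main obstacle I anticipate is bookkeeping in the critical case $\mathcal{R}_0^V=1$, where the $E$-coefficient vanishes identically and one must rely entirely on the invariance argument (rather than strict negativity of $\mathcal{V}_3^\prime$) to rule out trajectories with $E>0$; the weight $1-\epsilon p$ in the $I,A$ terms of $\mathcal{V}_3$ is exactly what is needed to make this cancellation work, and getting those weights right is the one non-routine algebraic point.
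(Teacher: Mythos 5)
Your Lyapunov function $\mathcal{V}_3$ is exactly the function $\mathcal{V}_1^V$ used in the paper, and your computation of $\mathcal{V}_3^\prime$, the identification of the key factor $S^\circ+\left(1-\epsilon\right)V^\circ=N\left(1-\epsilon p\right)$, and the concluding LaSalle argument all match the paper's (largely sketched) proof. The proposal is correct and, if anything, spells out the invariance argument in the critical case $\mathcal{R}_0^V=1$ more explicitly than the paper does.
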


\begin{proof}
We prove that the disease-free steady state of system \eqref{SVEIAR} is globally asymptotically stable by applying again the LaSalle's Invariance Principle for the Lyapunov function $\mathcal{V}_1^V: \left( \mathbb{R}^+ \right)^2\times\mathbb{R}^3 \to\mathbb{R}$ with
\begin{align*}
\mathcal{V}_1^V&\left(S\left(t\right),V\left(t\right),E\left(t\right),I\left(t\right),A\left(t\right)\right)=\left(S-\left(1-p\right)N-\left(1-p\right)N\ln{\dfrac{S}{\left(1-p\right)N}}\right)\nonumber\\
&+\left(V-pN-pN\ln{\dfrac{V}{pN}}\right)+E+\dfrac{\left(1-\epsilon p\right)N\beta_I}{\gamma_I+\mu} I+\dfrac{\left(1-\epsilon p\right)N\beta_A}{\gamma_A+\mu} A,
\end{align*}
and following the steps corresponding to the proof of Theorem~\ref{global_stability_dfss}.
\end{proof}

\begin{theorem}
If $\mathcal{R}_0^V> 1$, then the endemic steady state,  $\left(S^*,V^*,E^*,I^*,A^*,R^*\right)$, of system \eqref{SVEIAR} is globally asymptotically stable. 
\end{theorem}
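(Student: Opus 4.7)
The plan is to mimic the proof of Theorem~\ref{global_stability_endss} (the SEIAR endemic case) with one extra Volterra-type term to account for the vaccinated compartment. Concretely, I would introduce the candidate Lyapunov function $\mathcal{V}_2^V:\left(\mathbb{R}^+\right)^5\to\mathbb{R}$ defined by
\begin{align*}
\mathcal{V}_2^V=&\left(S-S^*-S^*\ln\tfrac{S}{S^*}\right)+\left(V-V^*-V^*\ln\tfrac{V}{V^*}\right)+\left(E-E^*-E^*\ln\tfrac{E}{E^*}\right)\\
&+\frac{\beta_I\left(S^*+(1-\epsilon)V^*\right)}{\gamma_I+\mu}\left(I-I^*-I^*\ln\tfrac{I}{I^*}\right)+\frac{\beta_A\left(S^*+(1-\epsilon)V^*\right)}{\gamma_A+\mu}\left(A-A^*-A^*\ln\tfrac{A}{A^*}\right).
\end{align*}
The coefficients in front of the $I$- and $A$-Volterra terms are dictated by the requirement that, after differentiation, the $\beta_I I$ and $\beta_A A$ cross-contributions coming from both the $S$-to-$E$ and the $V$-to-$E$ fluxes be fully absorbed, exactly as $\beta_IS^*/(\gamma_I+\mu)$ did in the SEIAR proof.

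Next, I would differentiate $\mathcal{V}_2^V$ along trajectories and substitute the six equilibrium identities
\[
(1-p)\mu N=\beta_IS^*I^*+\beta_AS^*A^*+\mu S^*,\quad p\mu N=(1-\epsilon)\left(\beta_IV^*I^*+\beta_AV^*A^*\right)+\mu V^*,
\]
\[
\left(S^*+(1-\epsilon)V^*\right)\left(\beta_II^*+\beta_AA^*\right)=(k+\mu)E^*,\quad k(1-q)E^*=(\gamma_I+\mu)I^*,\quad kqE^*=(\gamma_A+\mu)A^*,
\]
to eliminate the $\mu N$ and $(k+\mu)$ constants. After adding and subtracting the four auxiliary terms $\beta_IS^*I^{*2}E/(IE^*)$, $\beta_AS^*A^{*2}E/(AE^*)$, $(1-\epsilon)\beta_IV^*I^{*2}E/(IE^*)$ and $(1-\epsilon)\beta_AV^*A^{*2}E/(AE^*)$, the computation should collapse to
\begin{align*}
(\mathcal{V}_2^V)'=&-\mu S^*\left(\tfrac{S}{S^*}+\tfrac{S^*}{S}-2\right)-\mu V^*\left(\tfrac{V}{V^*}+\tfrac{V^*}{V}-2\right)\\
&-\beta_IS^*I^*\left(\tfrac{S^*}{S}+\tfrac{S}{S^*}\tfrac{I}{I^*}\tfrac{E^*}{E}+\tfrac{I^*}{I}\tfrac{E}{E^*}-3\right)-\beta_AS^*A^*\left(\tfrac{S^*}{S}+\tfrac{S}{S^*}\tfrac{A}{A^*}\tfrac{E^*}{E}+\tfrac{A^*}{A}\tfrac{E}{E^*}-3\right)\\
&-(1-\epsilon)\beta_IV^*I^*\left(\tfrac{V^*}{V}+\tfrac{V}{V^*}\tfrac{I}{I^*}\tfrac{E^*}{E}+\tfrac{I^*}{I}\tfrac{E}{E^*}-3\right)-(1-\epsilon)\beta_AV^*A^*\left(\tfrac{V^*}{V}+\tfrac{V}{V^*}\tfrac{A}{A^*}\tfrac{E^*}{E}+\tfrac{A^*}{A}\tfrac{E}{E^*}-3\right).
\end{align*}
Each parenthesis is non-negative by the AM--GM inequality (two-term for the first two, three-term for the last four, with all six geometric means equal to $1$), so $(\mathcal{V}_2^V)'\le 0$.

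Finally, I would observe that every AM--GM inequality above is strict unless all the ratios inside it are equal to $1$; tracking which equalities force which state variables to their starred values shows that $(\mathcal{V}_2^V)'=0$ precisely at $\left(S^*,V^*,E^*,I^*,A^*\right)$. The $R$-component then plays the same passive role as in Theorem~\ref{global_stability_endss}: the largest invariant subset of $\{(\mathcal{V}_2^V)'=0\}$ inside the positively invariant region is the singleton endemic equilibrium, and LaSalle's Invariance Principle \cite{lassalle1976} delivers global asymptotic stability.

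The main obstacle I expect is bookkeeping rather than conceptual difficulty: with both $S$- and $V$-mediated transmission now feeding $E$, the derivative produces eight bilinear cross terms (four in $I$, four in $A$) instead of the four that appeared in Theorem~\ref{global_stability_endss}, and the add/subtract trick must be applied twice to split them cleanly between the $\beta SI$-type and $\beta VI$-type triangles. Choosing the coefficient $\beta_I(S^*+(1-\epsilon)V^*)/(\gamma_I+\mu)$ (and its $A$-analogue) is the single point where that cancellation is forced, and provided it is done correctly the remainder of the argument is parallel to the SEIAR proof.
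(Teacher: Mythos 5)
Your proposal is correct and follows exactly the paper's approach: the Lyapunov function you construct is literally the one the paper uses (your coefficient $\beta_I\left(S^*+\left(1-\epsilon\right)V^*\right)/\left(\gamma_I+\mu\right)$ is the paper's $\left(\beta_IS^*+\left(1-\epsilon\right)\beta_IV^*\right)/\left(\gamma_I+\mu\right)$), and the decomposition into two quadratic-type and four triangle-type AM--GM terms followed by LaSalle is precisely the intended extension of Theorem~\ref{global_stability_endss}. In fact you supply more of the computation than the paper, which only states the Lyapunov function and refers back to the earlier proof.
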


\begin{proof}
We prove that the endemic steady state of system \eqref{SVEIAR} is globally asymptotically stable by applying again LaSalle's Invariance Principle for the Lyapunov function $\mathcal{V}_2^V: \left( \mathbb{R}^+ \right)^5 \to\mathbb{R}$, with
\begin{align*}
\mathcal{V}_2^V&\left(S\left(t\right),V\left(t\right),E\left(t\right),I\left(t\right),A\left(t\right)\right)=\left(S-S^*-S^*\ln{\dfrac{S}{S^*}}\right)+\left(V-V^*-V^*\ln{\dfrac{V}{V^*}}\right)\nonumber\\&+\left(E-E^*-E^*\ln{\dfrac{E}{E^*}}\right)+\dfrac{\beta_IS^*+\left(1-\epsilon\right)\beta_IV^*}{\gamma_I+\mu}\left(I-I^*-I^*\ln{\dfrac{I}{I^*}}\right)\nonumber\\&+\dfrac{\beta_AS^*+\left(1-\epsilon\right)\beta_AV^*}{\gamma_A+\mu}\left(A-A^*-A^*\ln{\dfrac{A}{A^*}}\right),
\end{align*}
and following the steps corresponding to the proof of Theorem~\ref{global_stability_endss}.
\end{proof}

\subsection{Numerical Simulations for the SVEIAR Model}
To assess vaccine effectiveness we focus on three important epidemiological measures \cite{Feng2011}: (i) the risk of infection spread, represented by $\mathcal{R}_0^V$; (ii) the peak prevalence of infection; (iii) the time at which the peak prevalence occurs. Relation \eqref{R0V} shows that the vaccine coverage, $p$, and vaccine efficacy, $\epsilon$, act multiplicatively on $\mathcal{R}_0$. As the proportion of asymptomatic cases is still unknown, in Fig~\ref{vaccine_effect_R0;a} we present a contour plot of the dependence of $\mathcal{R}_0^V$ on the vaccine coverage and vaccine efficacy, for different proportion of asymptomatic cases. The coloured curves represent the threshold $\mathcal{R}_0^V=1$ in \eqref{R0V}, between the infection spread (represented by the area below the threshold; $\mathcal{R}_0^V>1$) or not (represented by the area above the threshold; $\mathcal{R}_0^V<1$). The plot indicates that the vaccine efficacy and coverage need to be greater for small proportion of asymptomatic cases. As the number of symptomatic cases increases a more effective vaccine is needed. We see that even for a severe COVID-19 epidemic, as in our case, the vaccine can prevent the infection spread if both the vaccine efficacy and the vaccine coverage are high. Considering however that the data reflect a period where the severity of COVID-19 was not yet known and the average number of close contacts between individuals was very high due to occasions and events, the transmission rate, $\beta_I$, as obtained by the data is not the most appropriate index to predict vaccine effectiveness, as the situation has changed dramatically and close contacts have been significantly reduced. Hence, in Fig~\ref{vaccine_effect_R0;b} we present a corresponding contour plot for a lower $\beta_I$. We see that in the case of a reduced transmission rate the vaccine can prevent the infection spread, even for imperfect vaccines and small vaccine coverage.

In Fig~\ref{vaccine_effect_inf_dyn} we see the effect of vaccine efficacy on the proportion of the infection dynamics for high (Fig~\ref{vaccine_effect_inf_dyn;a}) and low (Fig~\ref{vaccine_effect_inf_dyn;b}) transmission rates. Higher vaccine efficacy leads to milder, but prolonged epidemics due to the slower rate of infection transmission. Moreover, it causes later occurrence of the first infection incidence and peak prevalence, and a slower rate of postpeak prevalence decline.

\begin{figure}[h!]
\begin{adjustwidth}{-.5in}{-.5in}  
 \centering
    \begin{subfigure}[b]{0.56\textwidth}
        \centering
        \includegraphics[height=5cm,width=5.6cm]{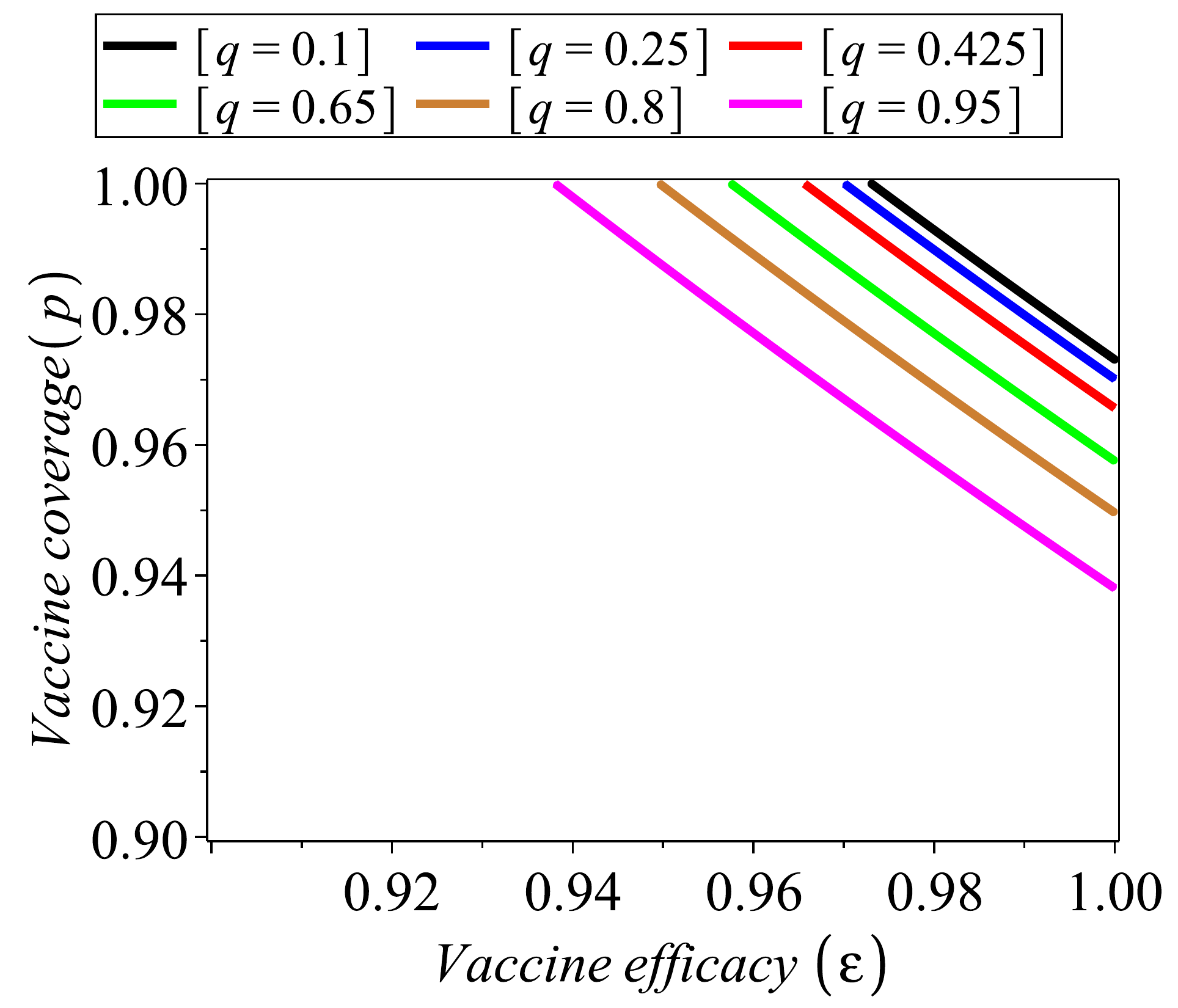}
        \caption{}
        \label{vaccine_effect_R0;a}  
    \end{subfigure}%
    ~ 
    \begin{subfigure}[b]{0.56\textwidth}
        \centering
        \includegraphics[height=5.58cm,,width=5.6cm]{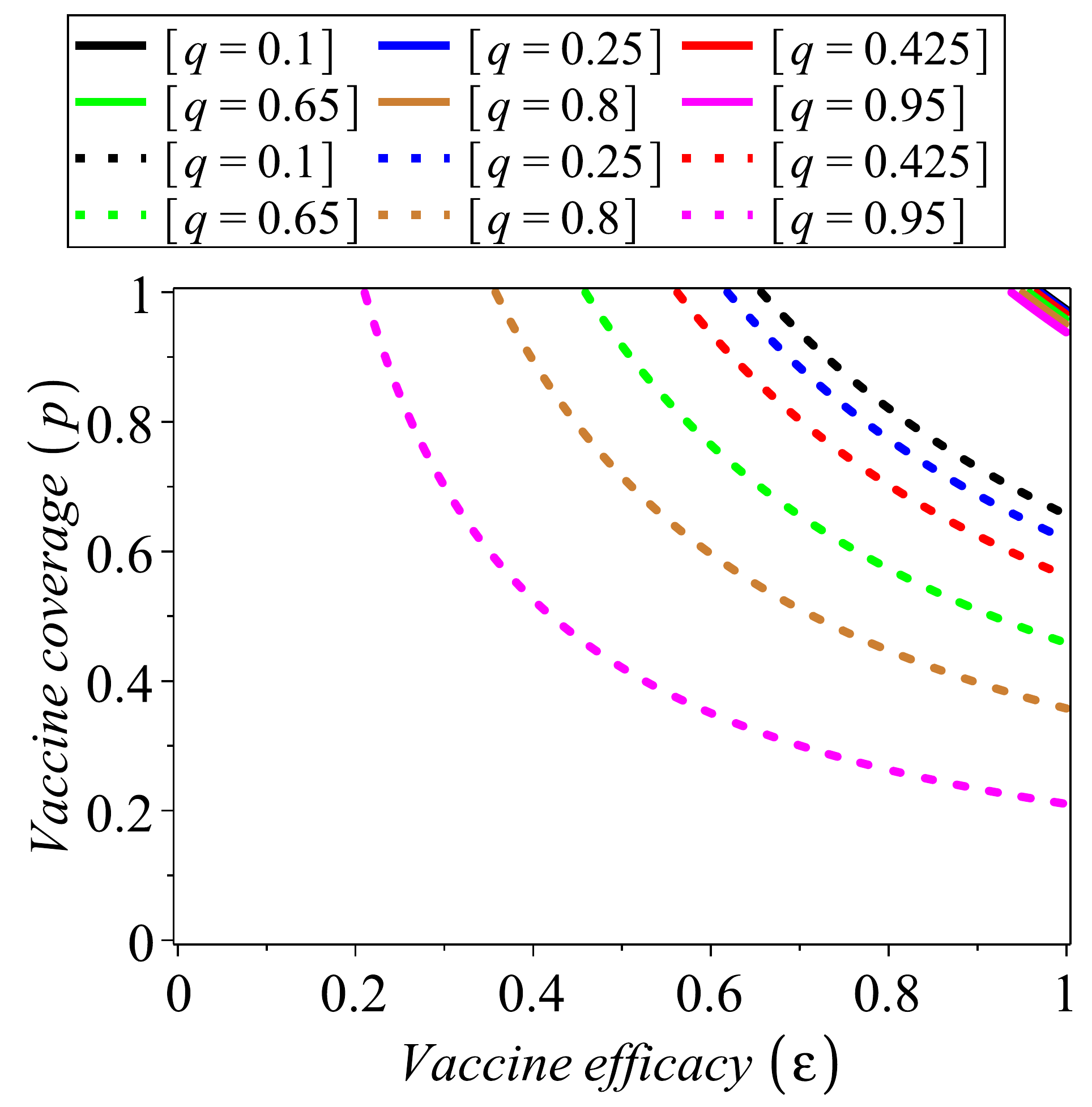}
        \caption{}
        \label{vaccine_effect_R0;b}  
    \end{subfigure}
      \end{adjustwidth}
   \caption{Assessing the vaccine effectiveness: A contour plot showing the dependence of $\mathcal{R}_0^V$ on the vaccine efficacy, vaccine coverage and proportion of asymptomatic cases for (a) $\beta_I=2.55$; (b) $\beta_I=0.2$. The rest of the model parameters are given in Table~\ref{table_par}.}  
\label{vaccine_effect_R0}       
\end{figure}

\begin{figure}[h!]
\begin{adjustwidth}{-.5in}{-.5in}  
 \centering
    \begin{subfigure}[b]{0.56\textwidth}
        \centering
        \includegraphics[height=5cm,width=5.6cm]{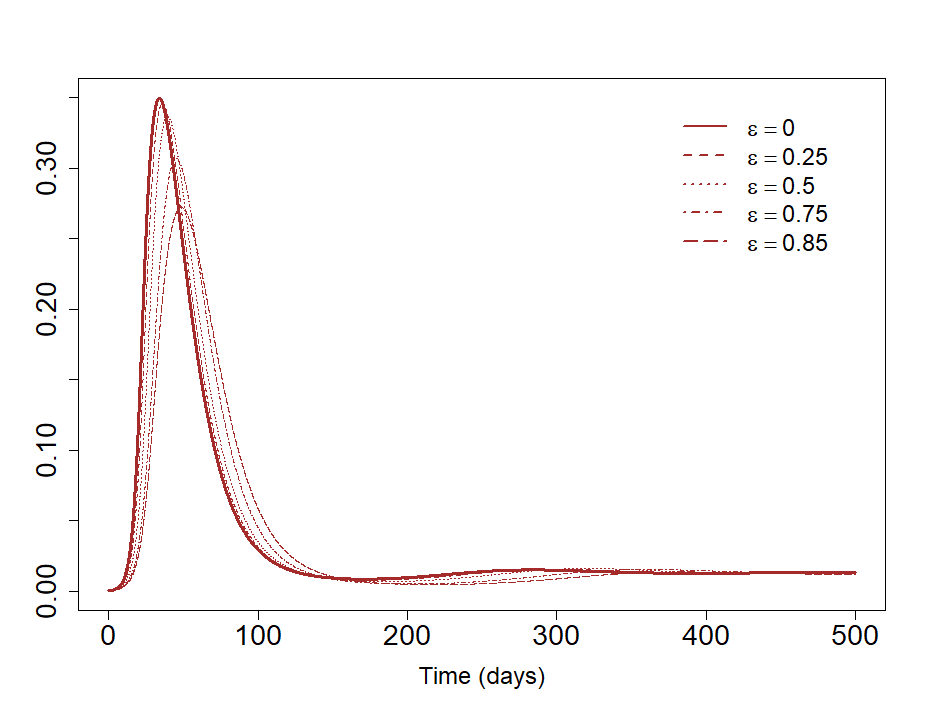}
        \caption{}
        \label{vaccine_effect_inf_dyn;a}  
    \end{subfigure}%
    ~ 
    \begin{subfigure}[b]{0.56\textwidth}
        \centering
        \includegraphics[height=5cm,,width=5.6cm]{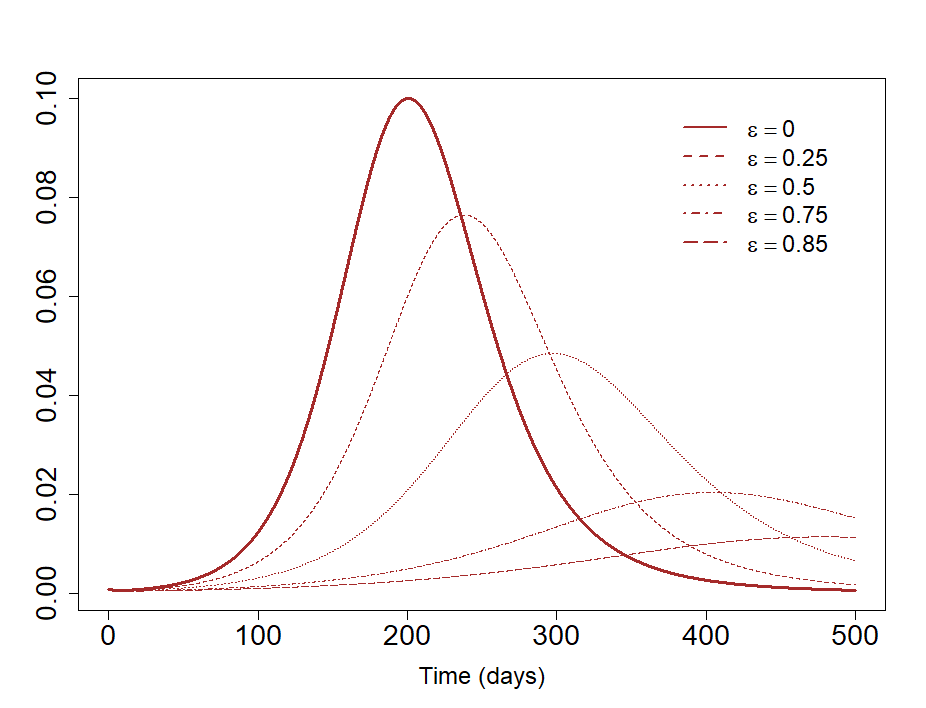}
        \caption{}
        \label{vaccine_effect_inf_dyn;b}  
    \end{subfigure}
      \end{adjustwidth}
 \caption{Assessing the vaccine effectiveness: The effect of vaccination on the prevalence of infection, with ICs: $S_0=V_0=0.4996, I_0=A_0=0.0004$, and  for (a) $\beta_I=2.55$; (b) $\beta_I=0.2$. The rest of the model parameters are given in Table~\ref{table_par}.}  
\label{vaccine_effect_inf_dyn}       
\end{figure}
\begin{table}[!t]
\begin{adjustwidth}{-.5in}{-.5in}  
        \begin{center}
\caption{Model parameters, values, units and relevant references.}
\label{model_param}      
\begin{tabular}{p{1cm}p{6cm}p{3.1cm}p{2.5cm}p{2cm}}
\hline\noalign{\smallskip}
Param. & Description  & Value & Unit  & Reference \\
\noalign{\smallskip}\svhline\noalign{\smallskip}
$\mu$ & Birth/Death rate & $0.001 (2\cdot10^{-5}-0.001)$  & days$^{-1}$ & \cite{UNdata,Keeling2011}\\
$\beta_I$ & Transmission rate of symptomatic infectious individuals & $2.55 (0.14-2.55)$ & individuals$^{-1}\cdot$days$^{-1}$ & \cite{Li2020,Ndairou2020,Pribylova2020,Castilho2020,Calafiore2020}\\
$\beta_A$ & Transmission rate of asymptomatic infectious individuals & $\dfrac{\beta_I}{2}$ & individuals$^{-1}\cdot$days$^{-1}$ & \cite{Sypsa2020}\\
$k$ & Incubation rate (rate of latent individuals becoming infectious) & $0.071 (0,071-0.2)$  &  days$^{-1}$ & \cite{Who202072,Lauer2020} \\
$q$ & Proportion of the asymptomatic infectious individuals & $0.425 (0-1)$  & - & \cite{Lavezzo2020}\\
$\gamma_I$ & Recovery rate of the symptomatic infectious individuals & $0.0625 (0.02-0.0625)$ & days$^{-1}$ & \cite{Yang2020,Zhou2020she,Zhou2020}\\
$\gamma_A$ & Recovery rate of the asymptomatic infectious individuals & $0.083 (0.083-0.33)$ & days$^{-1}$ & \cite{Yang2020,Zhou2020}\\
$p$ & Proportion of vaccinated individuals &  $0.5 (0-1)$ & - & Estimated\\
$\epsilon$ & Vaccine efficacy & $0.5 (0-1)$ & - & Estimated\\
\noalign{\smallskip}\hline\noalign{\smallskip}
\end{tabular}
\label{table_par}
\end{center}
    \end{adjustwidth}
\end{table}

\section{Conclusions}
\label{conclusions}
We presented an {\it ad hoc} SEIAR model with horizontal transmission and demographic terms for the epidemic spread of COVID-19, and we extended the model to include vaccination. The stability of both models is proved by implementing suitable  Lyapunov functions; the model is fitted  to real data from the epidemic in Italy. We studied the condition under which a vaccine can prevent disease spread. We accessed the vaccine effectiveness  focusing on the risk of infection spread, the peak prevalence of infection and the time at which the peak prevalence occurs.

Future work includes further investigation of the vaccine model, by incorporating different vaccination strategies, and if possible the comparison with biological data. An extension of the model will also include additional important factors of  COVID-19 spread, such as the population age, the geographical spread of the the epidemics (see e.g. Refs \cite{Diekmann1978,Khachatryan2020,Sergeev2019} and other references therein) and the waning immunity gained by infected individuals, as well as vertical transmission and migration terms for the infected individuals.



\begin{thebibliography}{99.}%
%
%
 \bibitem{gorbalenya2020} Gorbalenya, A. E., Baker, S. C., Baric, R. S., de Groot, R. J., Drosten, C., Gulyaeva, A. A., et al.: The species Severe acute respiratory syndrome-related coronavirus: classifying 2019-nCoV and naming it SARS-CoV-2. Nature Microbiology. \textbf{5} (4), 536-–544 (2020)

\bibitem{Worldometer}  Worldometer - www.worldometers.info

\bibitem{Lavezzo2020}   Lavezzo, E. et al.: Suppression of a {SARS}-{C}o{V}-2 outbreak in the Italian municipality of Vo’. Nature (2020)

\bibitem{Yang2020} Yang, R., Gui, X., \& Xiong, Y.: Comparison of clinical characteristics of patients with asymptomatic vs symptomatic coronavirus disease 2019 in Wuhan, China. JAMA Network Open, \textbf{3} (5), e2010182-e2010182 (2020)

\bibitem{Li2020} Li, R., Pei, S., Chen, B., Song, Y., Zhang, T., Yang, W., \& Shaman, J.: Substantial undocumented infection facilitates the rapid dissemination of novel coronavirus (SARS-CoV-2). Science, \textbf{368} (6490), 489--493 (2020)


\bibitem{Heneghan2020} Heneghan, C., Brassey, J., \& Jefferson, T.: COVID-19: What proportion are asymptomatic? CEBM (2020)

\bibitem{Diekmann1990} Diekmann, O., Heesterbeek, J. A. P., Metz, J. A.: On the definition and the computation of the basic reproduction ratio $R_0$ in models for infectious diseases in heterogeneous populations. J. Math. Biol. \textbf{28} (4), 365--382 (1990) 
 
\bibitem{Diekmann2010} Diekmann, O., Heesterbeek, J. A. P., Roberts, M. G.: The construction of next-generation matrices for compartmental epidemic models. J. R. Soc. Interface. \textbf{7} (47), 873--885  (2010)

\bibitem{edelstein2005}  Edelstein-Keshet, L.: Mathematical Models in Biology. SIAM (2005)
 
\bibitem{Mathematica} Wolfram Research, Inc., Mathematica, Version 12.1, Champaign, IL (2020).


\bibitem{lassalle1976} La Salle, J. P.: The Stability of Dynamical Systems. SIAM (1976)


\bibitem{ECDC} European Centre for Disease Prevention and Control (ECDC) (2020) https://www.ecdc.europa.eu/en/publications-data/download-todays-data-geographic-distribution-covid-19-cases-worldwide


\bibitem{kermmck1927} Kermack, W. O., \& McKendrick, A. G.: A contribution to the mathematical theory of epidemics. Proc. Roy. Soc. Lond. Math. Phys. Sci., \textbf{115} (772), 700--721(1927)

\bibitem{braun1993differential} Braun, M.: Differential Equations and Their Applications, 4th ed. New York: Springer-Verlag (1993)

\bibitem{Ndairou2020} Ndairou, F., Area, I., Nieto, J. J., \& Torres, D. F.: Mathematical modeling of COVID-19 transmission dynamics with a case study of Wuhan. Chaos, Solitons \& Fractals, 109846 (2020)


\bibitem{mclean1995} McLean, A. R.: Vaccination, evolution and changes in the efficacy of vaccines: a theoretical framework. Proc. Biol. Sci. \textbf{261} (1362), 389--393 (1995)

\bibitem{Feng2011} Feng, Z., Towers, S., \& Yang, Y.: Modeling the effects of vaccination and treatment on pandemic influenza. AAPS J., \textbf{13} (3), 427--437  (2011)

\bibitem{UNdata} UNdata: Crude birth/death rate (per 1,000 population). United Nations (2020)

\bibitem{Keeling2011} Keeling, M. J., \& Rohani, P.: Modeling infectious diseases in humans and animals. Princeton University Press (2011)

\bibitem{Pribylova2020} Pribylova, L., \& Hajnova, V.: SEIAR model with asymptomatic cohort and consequences to efficiency of quarantine government measures in COVID-19 epidemic. arXiv:2004.02601 (2020)

\bibitem{Castilho2020} Castilho, C., Gondim, J. A., Marchesin, M., \& Sabeti, M.: Assessing the efficiency of different control strategies for the COVID-19 epidemic. EJDE, \textbf{2020} (64), 1--17 (2020)

\bibitem{Calafiore2020} Calafiore, G. C., Novara, C., \& Possieri, C.: A modified {SIR} model for the {COVID}-19 contagion in Italy. arXiv:2003.14391  (2020)


\bibitem{Sypsa2020} Sypsa, V., Roussos, S., Paraskevis, D., Lytras, T., Tsiodras, S., \& Hatzakis, A.: Modelling the SARS-CoV-2 first epidemic wave in Greece: social contact patterns for impact assessment and an exit strategy from social distancing measures. medRxiv  (2020)

\bibitem{Who202072} World Health Organization: Coronavirus disease 2019 (COVID-19): situation report, 72 (2020)

\bibitem{Lauer2020} Lauer, S. A., et al.: The incubation period of coronavirus disease 2019 (COVID-19) from publicly reported confirmed cases: estimation and application. Ann. Intern. Med. 172.9, 577-582 (2020)

\bibitem{Zhou2020she} Zhou, B., She, J., Wang, Y., \& Ma, X.: The duration of viral shedding of discharged patients with severe COVID-19. Clin. Infect. Dis.  (2020). 

\bibitem{Zhou2020} Zhou, R., Li, F., Chen, F., Liu, H., Zheng, J., Lei, C., \& Wu, X.: Viral dynamics in asymptomatic patients with {COVID}-19.  Int. J. Infect. Dis. 96:288--290 (2020)

\bibitem{Diekmann1978} Diekmann, O.: Thresholds and travelling waves for the geographical spread of infection. J. Math. Biol., \textbf{6} (2), 109 (1978)

\bibitem{Khachatryan2020} Khachatryan, K. A., Narimanyan, A. Z., \& Khachatryan, A. K.: On mathematical modelling of temporal spatial spread of epidemics. Math. Model. Nat. Phenom., \textbf{15} (6), 1--14 (2020)

\bibitem{Sergeev2019} Sergeev, A., \& Khachatryan, K.:  On the solvability of a class of nonlinear integral equations in the problem of a spread of an epidemic. Trans. Moscow Math. Soc., 80, 95--111 (2019)

\end{thebibliography}
\end{document}